\newtheorem{assumption}{Assumption}
\newtheorem{remark}{Remark}
\newtheorem{proposition}{Proposition}
\newtheorem{lemma}{Lemma}
\renewenvironment{proof}[1][\relax]{\par
  \pushQED{\qed}%
  \normalfont \topsep6\p@\@plus6\p@\relax
  \trivlist
  \item[\hskip\labelsep\itshape
    \ifx#1\relax \proofname\else\proofname{} of #1\fi\@addpunct{.}]\ignorespaces
}{%
  \popQED\endtrivlist\@endpefalse
}
\newcommand{\simplex}{\Delta}
\newcommand{\genericRV}{X}
\newcommand{\genericSupport}{\mathcal{X}}
\newcommand{\altgenericRV}{Y}
\newcommand{\altgenericSupport}{\mathcal{Y}}
\newcommand{\entropy}[1]{H(#1)}
\newcommand{\probability}[1]{p(#1)}
\newcommand{\mdp}{\mathcal{M}}
\newcommand{\states}{S}
\newcommand{\actions}{A}
\newcommand{\transition}{P}
\newcommand{\discount}{\gamma}
\newcommand{\reward}{r}
\newcommand{\state}{s}
\newcommand{\action}{a}
\newcommand{\altstate}{q}
\newcommand{\game}{\mathcal{G}}
\newcommand{\initialstate}{\state_{0}}
\newcommand{\policy}{\pi}
\newcommand{\gamevalue}{V}
\newcommand{\qvalue}{Q}
\newcommand{\distribution}{d}
\newcommand{\observationset}{\Omega}
\newcommand{\observation}{\omega}
\newcommand{\observationfunction}{O}
\newcommand{\indexset}{I}
\newcommand{\belief}{b}
\newcommand{\observationcost}{c}
\newcommand{\costbudget}{C}
\newcommand{\randomvar}[1]{\pmb{#1}}
\title{\LARGE \bf
Deceptive Planning Exploiting Inattention Blindness
}
\author{Mustafa O. Karabag$^{1}$, Jesse Milzman$^{2}$, Ufuk Topcu$^{1}$ 
\thanks{$^{1}$M. O. Karabag and U. Topcu are with the University of Texas at Austin, Austin, TX 78712 {\tt\small \{karabag, utopcu\}@utexas.edu}.} \thanks{$^{2}$J. Milzman is with the U.S. Army Research Laboratory, Adelphi, MD 20783 {\tt\small jesse.m.milzman.civ@army.mil}.} %
}
\pgfplotsset{compat=1.18} %
\begin{document}

\maketitle
\thispagestyle{empty}
\pagestyle{empty}

\begin{abstract} 
We study decision-making with \textit{rational inattention} in settings where agents have perception constraints. In such settings, inaccurate prior beliefs or models of others may lead to \textit{inattention blindness}, where an agent is unaware of its incorrect beliefs. We model this phenomenon in two-player zero-sum stochastic games, where Player 1 has perception constraints and Player 2 deceptively deviates from its security policy presumed by Player 1 to gain an advantage. We formulate the perception constraints as an online sensor selection problem, develop a value-weighted objective function for sensor selection capturing rational inattention, and propose the greedy algorithm for selection under this monotone objective function. When Player 2 does not deviate from the presumed policy, this objective function provides an upper bound on the expected value loss compared to the security value where Player 1 has perfect information of the state. We then propose a myopic decision-making algorithm for Player 2 to exploit Player 1's beliefs by deviating from the presumed policy and, thereby, improve upon the security value. Numerical examples illustrate how Player 1 persistently chooses sensors that are consistent with its priors, allowing Player 2 to systematically exploit its inattention.
\end{abstract}

\section{Introduction} \label{sec:introduction}

\textit{Rational inattention}~\cite{sims2003implications} is an economics model where agents make decisions with incomplete information, as acquiring or processing information is costly, or because the missing information does not add value to the agent's decisions. Decisions of such an agent rely on beliefs about its environment and other agents that share the same environment. Therefore, the agent must perform perception actions to obtain observations about these unknowns.

The agent updates its beliefs using observations. On the other hand, the accuracy of updates relies on the accuracy of the prior beliefs as well as the accuracy of the agent's models of the others. If these priors or assumptions are not accurate, the agent may suffer from \textit{inattention blindness}~\cite{mack1998inattentional}: The agent is not only incorrect about its beliefs but also it is unaware of this incorrectness since it does not collect observations to falsify these beliefs, and the received observations conform with the incorrect beliefs~\cite{klayman1995varieties}. In this case, the agent's adversaries can deviate from the presumed behavior to gain an advantage, exploiting inattention blindness. Such deceptive actions naturally emerge in different domains: in sports, a player makes a fake run to draw the attention of the opponent while another player who is presumed to be stationary and is not in the field of vision, makes an unnoticed run in the opposite direction; in military operations, a force repeatedly deploys decoy attack signals to cause the enemy not process these signals and then perform the attack unobserved; in cybersecurity, an attacker leaking data uses more primitive, low-bandwidth channels as these channels are not observed since the defender assumes that these channels would be highly inefficient for the attacker.

We model such interactions in two-player discounted zero-sum stochastic games. Player 1 does not fully observe the state; instead, it performs online perception at each step to choose sensors that refine its belief and decides on an action. Player 2 follows a known fixed policy, but its actions are not observable. The environment is a partially observable Markov decision process (partially observable MDP) from the perspective of Player $1$. 

To model the rational inattention for online perception, we propose an online sensor selection algorithm that aims to resolve the ambiguity about the states where Player 1's decisions change its value. In detail, for each state, we compute the conditional binary entropy of the state indicator variable given the selected sensors. We weight this conditional entropy by the gap between the highest and lowest $Q$-values. Summing these terms over all states yields our value-weighted entropy objective, which favors sensors that reduce the uncertainty about the high-stakes states where action choices lead to large value differences. Since this objective function is monotone in the chosen set of sensors, we propose using the greedy algorithm for online sensor selection. We show that, combined with the $Q_{\text{MDP}}$ heuristic~\cite{littman1995learning} and assuming that the player's belief matches the actual state distribution, this objective function provides a bound on the expected value loss for Player 1 compared to the case where it gets perfect observations of the state (i.e., compared to the optimal value of the MDP).

The value loss bound for Player $1$ holds in the zero-sum stochastic game setting if Player $2$ follows the presumed policy. To model deceptive planning exploiting inattention blindness, we consider that Player 2 deviates from this presumed policy. We model Player $2$ as choosing myopic deviations: given the belief of Player $1$, the minimizer Player $2$ chooses the action distribution with the lowest expected $Q$-value. We show that such deviations can only improve the expected return of Player $2$ since its expected discounted return for every time step is better than the security value.

We demonstrate this framework in two different numerical examples. In the first example, a defender protects a line, and the attacker aims to intrude at the farthest point from the defender. The proposed sensor selection approach results in the defender sensing the vertical position of the attacker, thereby making the attacker's horizontal moves unnoticed to gain an advantage. In the second numerical example, we quantitatively demonstrate the proposed framework in randomly generated games, highlighting that a player can exploit the inattentional blindness of the other to gain an advantage compared to the security value under different sensor selection methods.

\textit{Related work:}
In Economics, rational inattention models near-optimal decision-making with deliberately ignoring some information resources~\cite{sims2003implications,mackowiak2023rational}. For dynamic decision making, \cite{hebert2017rational} models rational inattention in a sequential information sampling problem where the decision-maker makes continuous-valued decisions to resolve state uncertainty that are subject to a cost constraint. 
For a partially observable MDP (POMDP), \cite{shafieepoorfard2016rationally} and \cite{shafieepoorfard2017rationally} model rational inattention as the co-design of the observation function and the control policy for a POMDP subject to a mutual information constraint between state and observations. We model the rational inattention in MDPs as an online sensor selection problem where sensors are chosen to resolve a value-weighted state uncertainty function.

Active perception aims to minimize belief uncertainty for the state to improve the accuracy of action decisions. Incorporating belief-dependent rewards in a POMDP implicitly encourages actions ~\cite {araya2010pomdp,spaan2015decision}. When the perception actions, i.e., sensor selections, are decoupled from the dynamics actions, the active perception problem can be modeled as an online sensor selection problem where the decision-maker chooses a subset of sensors from a set at each time step~\cite{satsangi2018exploiting,ghasemi2019online,krause2007near}. Existing works utilized entropy reduction for the state belief as the online sensor selection objective function~\cite{ghasemi2019online,satsangi2018exploiting}. While this approach offers desirable computational properties (e.g., submodularity), it may result in selecting sensors (paying attention to observations) that reduce state uncertainty but do not impact the expected return. Utilizing the value of decisions in the perfect information setting, we propose a state value-weighted entropy function that encourages the selection of the sensors that change the value and gives an upper bound on the expected value loss compared to the perfect information setting.

Deceptive planning aims to find a controller for an agent that exploits the lack of information or inaccurate beliefs of other agents~\cite{masters2017deceptive,karabag2021deception,rostobaya2023deception,suttle2025value}. Existing deceptive planning literature focuses on hiding targets from an observer by deviating from a behavioral model used by the observer to predict the movements of the ego agent~\cite{masters2017deceptive,10.5555/3463952.3464050,lewis2023deceptive,suttle2025value,chen2024deceptive}. Alternatively, deceptive motions that generate ambiguity can emerge as an equilibrium behavior in games~\cite{rostobaya2023deception}. We use a zero-sum game between two players and, similar to previous works, assume a rational behavior for the deceiving party~\cite{masters2017deceptive,lewis2023deceptive,suttle2025value,chen2024deceptive,rostobaya2023deception}. Unlike the existing works that often focus on the lack of information regarding targets, we focus on the lack of information regarding the game state and exploit partial observations. The works \cite{karabag2022exploiting,ma2024covert,fu2022almost} also focus on deception exploiting partial observations. These works focus on minimizing the detectability of a deviating single agent for fixed sensors, while we focus on a game between two players where sensors are chosen online.

\section{Preliminaries and Notation}
We denote the $N$-dimensional probability simplex by $\simplex^{N}$, and the probability simplex over the set $C$ by $\simplex^{C}$. For random variables $X$ and $Y$, with a slight abuse of notation, we use $p(x)$, $p(x,y)$, $p(x|y)$ to denote the probability of $x$, joint probability of $x$ and $y$, and the conditional probability of $x$ given $y$. We denote the indicator function of a variable $x$ with $\mathds{1}_{y}(x)$, which equals $1$ if $x=y$ and $0$ otherwise. The random variable $\mathds{1}_{x}(X)$ is $1$ if $X = x$ and $0$ otherwise.

\subsection{Information Theoretical Quantities}
The entropy of a random variable $\genericRV$ with support $\genericSupport$ is 
\[
\entropy{\genericRV} =  - \sum_{x \in \genericSupport} \probability{x} \log_{2}  \probability{x}.
\]

The conditional entropy of a random variable $\genericRV$ given the random variable $\altgenericRV$ with support $\altgenericSupport$ is
\[
\entropy{\genericRV| \altgenericRV} =  - \sum_{x \in \genericSupport } \sum_{ y \in \altgenericSupport} \probability{x,y} \log_{2}  \frac{\probability{x,y}}{ \probability{y}}.
\]

\subsection{Markov decision processes and two-player zero-sum stochastic games}

A Markov decision process (MDP) $\mdp= (\states, \actions, \transition, \reward, \initialstate, \discount)$ is a tuple where $\states$ is a finite set of states, $\actions$ is a finite set of actions, $\transition:\states \times \actions \times \states \to [0,1]$ is the transition probability function such that $\sum_{\altstate \in \states}\transition(\state, \action, \altstate) = 1$ for all $\state \in \states, \action \in \actions$, $\reward: \states \times \actions \to [-R_{max}, R_{max}]$ is the reward function, and $\discount \in [0, 1)$ is the discount factor. A stationary policy $\policy:\states \times \actions \to [0,1]$ maps each state to an action distribution such that $\sum_{\action \in \actions} \policy(\state, \action) = 1$ for all $\state \in \states$. Under policy $\policy$, the expected discounted return from initial state $\state$ is \[\gamevalue^{\pi}(\state) = 
\mathbb{E}\left[\sum_{t=0}^{\infty} \discount^{t} \reward(\state_{t}, \action_{t}) \Bigg| \state, \policy \right],
\] where $\state_{0}\action_{0}\state_{1}\action_{1}\ldots$ is the random sequence of states and actions. We denote 
\[
\qvalue^{\policy}(\state, \action) =  
\mathbb{E}\left[\sum_{t=0}^{\infty} \discount^{t} \reward(\state_{t}, \action_{t}) \Bigg| \state_{0} = \state, \action^{1}_{0} = \action \right]
\] where actions $\action_{1}\action_{2}\ldots$ are sampled according to $\policy$.

There exists a stationary policy $\pi^{*}$ such that for all states $\state \in \states$, \[\gamevalue^{\pi^{*}}(\state) = \max_{\pi} \gamevalue^{\policy}(s).\] We use $\gamevalue^{*}(\state)$ to denote $\gamevalue^{\pi^{*}}(\state)$ and $\qvalue^{*}(\state, \action)$ to denote $\qvalue^{\policy^{*}}(\state, \action)$.

With a slight abuse of notation, we define a two-player zero-sum stochastic game $\game = (\states, \actions^{1}, \actions^{2}, \transition, \reward, \initialstate,  \discount)$ as a tuple where $\states$ is a finite set of states, $\actions^{1}$ is a finite set of actions for Player 1, $\actions^{2}$ is a finite set of actions for Player 2, $\transition:\states \times \actions^{1} \times \actions^{2} \times \states \to [0,1]$ is the transition probability function such that $\sum_{\altstate \in \states}\transition(\state, \action^{1}, \action^{2}, \altstate) = 1$ for all $\state \in \states, \action^{1}\in \actions^{1}, \action^{2} \in \actions^{2}$, $\reward: \states \times \actions^{1} \times \actions^{2} \to [-R_{max}, R_{max}]$ is the reward function for Player 1, $-\reward$ is the reward function for Player 2, and $\discount\in [0, 1)$ is the discount factor. A stationary policy $\policy^{i}:\states \times \actions^{i} \to [0,1]$ for player $i$ maps each state to an action distribution such that $\sum_{\action^{i} \in \actions^{i}} \policy^{i}(\state, \action^{i}) = 1$ for all $\state \in \states$.

Under policies $(\policy^{1}, \policy^{2})$, the discounted expected return of Player 1 from initial state $\state$ is
\[ \gamevalue^{\policy^{1}, \policy^{2}}(s) = 
\mathbb{E}\left[\sum_{t=0}^{\infty} \discount^{t} \reward(\state_{t}, \action^{1}_{t}, \action^{2}_{t}) \Bigg| \state_{0} = \state, \policy^{1}, \policy^{2}\right].
\]Player 1's goal is to maximize, and Player 2's goal is to minimize $\gamevalue^{\policy^{1}, \policy^{2}}(s)$. There exists an equilibrium pair of stationary policies $(\pi^{1, *}, \pi^{2, *})$ such that for all states $\state \in \states$, \[\gamevalue^{\pi^{1, *}, \pi^{2, *}}(\state) = \max_{\pi^{1}} \min_{\pi^{2}} \gamevalue^{\policy^{1}, \policy^{2}}(s) = \min_{\pi^{2}} \max_{\pi^{1}} \gamevalue^{\policy^{1}, \policy^{2}}(s),\] which is the security value for the players. We denote 
\[
\qvalue^{\policy^{1}, \policy^{2}}(\state, \distribution^{1}) =  
\mathbb{E}\left[\sum_{t=0}^{\infty} \discount^{t} \reward(\state_{t}, \action^{1}_{t}, \action^{2}_{t}) \Bigg| \state_{0} = \state, \action^{1}_{0} \sim \distribution^{1}\right]
\] where action $\action^{1}_{0}$ is drawn from $\distribution^{1}$, actions $\action^{1}_{1}\action^{1}_{2}\ldots$ are sampled according to $\policy^{1}$, and actions $\action^{2}_{0}\action^{2}_{1}\ldots$ are sampled according to $\policy^{2}$. Additionally, with an overload of notation, we denote
\begin{align*}
    &\qvalue^{\policy^{1}, \policy^{2}}(\state, \distribution^{1}, \distribution^{2}) =
    \\
&\mathbb{E}\left[\sum_{t=0}^{\infty} \discount^{t} \reward(\state_{t}, \action^{1}_{t}, \action^{2}_{t}) \Bigg| \state_{0} = \state, \action^{1}_{0} \sim \distribution^{1}, \action^{2}_{0} \sim \distribution^{2}\right]
\end{align*}
 where action $\action^{1}_{0}$ is drawn from $\distribution^{1}$, action $\action^{2}_{0}$ is drawn from $\distribution^{2}$, actions $\action^{1}_{1}\action^{1}_{2}\ldots$ are sampled according to $\policy^{1}$, and actions $\action^{2}_{1}\action^{2}_{1}\ldots$ are sampled according to $\policy^{2}$.

In the game setting, we use $\gamevalue^{*}(\state)$ to denote $\gamevalue^{\pi^{1, *}, \pi^{2, *}}(\state)$, $\qvalue^{*}(\state, \distribution^{1})$ to denote $\qvalue^{\pi^{1, *}, \pi^{2, *}}(\state, \distribution^{1})$, and $\qvalue^{*}(\state, \distribution^{1}, \distribution^{2})$ to denote $\qvalue^{\pi^{1, *}, \pi^{2, *}}(\state, \distribution^{1}, \distribution^{2})$.

\subsection{Partially observable MDPs and online sensor selection}
In a single-agent environment, consider an agent that does not have full observations of its own state. That is, the agent's environment is a \textit{partially observable MDP}, where the states, actions, and the transition probability function are defined the same as in an MDP. The agent collects a set of observations from sensors to maintain a belief $\belief$ over its state where $\belief:\states \to [0,1]$ and $\sum_{\state \in \states} \belief(s) = 1$. Let $\observationset^{1}, \ldots, \observationset^{N}$ be sets of observations associated with $N$ different sensors. Each set of observations is associated with an observation function $\observationfunction^{i}:\states \times \observationset^{i} \to [0,1]$ which maps state $s$ and observation $\observation$ to a probability value. Additionally, each sensor \(i\) has an associated cost $\observationcost^{i}$. As assumed in \cite{ghasemi2019online},
we also assume that the sensors are disjoint and independent given the state. 
\begin{assumption} \label{assumption:independence}
For all $i,j \in [N]$, $\observationset^{i} \cap \observationset^{j} = \emptyset$ and \[\probability{\observation^{i}, \observation^{j} | \state} = \probability{\observation^{i} | \state} \probability{\observation^{j} | \state} = \observationfunction^{i}(\state, \observation^{i}) \observationfunction^{j}(\state, \observation^{j})\] \text{ for all $\observation^{i} \in \observationset^{i}$, $\observation^{j} \in \observationset^{j}$, and $\state \in \states$}.  
\end{assumption}

Let $\belief_{t}$ denote the prior belief at time $t$ before observations and $\belief'_{t}$ denote the posterior belief after observations.
Given a set $\indexset_{t}$ of observation indices and a belief $\belief_{t}$, the agent updates its belief according to the Bayes' rule: 
\begin{equation}
    \belief'_{t}(\state) = \frac{\prod_{i \in \indexset_{t}} \observationfunction^{i}(\state, \observation^{i}) \belief_{t}(\state)}{\sum_{\altstate \in \states}\prod_{i \in \indexset_{t}} \observationfunction^{i}(\altstate, \observation^{i}) \belief_{t}(\altstate) } \label{eq:obsbeliefupdate}
\end{equation}

In the online perception setting, at each time $t$, the agent chooses a set $\indexset_{t}$ of sensors according to its current belief $\belief_{t}$. The updated belief is then used to make a decision.
Existing approaches often aim to reduce the uncertainty of the belief by minimizing the conditional entropy or variance of the belief. For example, given a belief $\belief_{t}$, the work \cite{ghasemi2019online} proposes to minimize \(\entropy{\randomvar{\state} | \cup_{i \in \indexset} \randomvar{\observation}^{i}}\) subject to $\sum_{i \in \indexset} \observationcost^{i} \leq \costbudget$ where the random state $\randomvar{\state}$ is distributed according to $\belief_{t}$ and proposes an approximate greedy algorithm to minimize this function.

\begin{remark}
    In the next sections, we consider agents with partial observations of the state. In order to provide notional simplicity, we do not formally define partially observable MDPs or partially observable stochastic games, nor do we focus on solution methods for these models, as our results rely solely on definitions from the fully observable settings.
\end{remark}

\section{Value-loss Weighted Online Sensor Selection Modeling Rational Inattention}
\label{sec:blueplayer}
\usetikzlibrary{positioning, calc}

\textit{Rational inattention theory}~\cite{sims2003implications} proposes that a decision-making agent may prefer not to acquire or process certain information resources if the acquisition or processing of these resources is costly or if they do not affect the optimality of the agent's decisions for its objective. 

In this section, we propose a value-loss weighted online active sensor selection algorithm to capture rational inattention. While our goal is to develop a framework for deception exploiting inattention blindness in two-player games, for notational simplicity, in this section, we consider that the decision-making agent's environment is a partially observable MDP, i.e., only the considered agent takes actions, and the other player's policy is a known fixed policy. In Section \ref{sec:exploitation}, we study the setting where the other player deviates from the assumed policy and generalize the ideas in this section to two-player stochastic games. 

\subsection{Online Optimistic Sensor Selection} \label{sec:myopiconlineperception}
In a partially observable environment, reducing the state entropy uniformly may cause the agents to choose sensors that do not necessarily alter the optimal decision. For example, a driver slows down if the next car forward in the lane slows down, regardless of the other cars' speeds. However, reducing the state entropy may require measuring the other cars' speeds if they are encoded in the state space. 

\textbf{Modeling rational inattention in online perception:}
To capture the effect of value in online perception, we propose to minimize the objective function
\begin{equation} \label{eq:objfunction}
   \sum_{\state \in \states} \entropy{\mathds{1}_{\state}(\randomvar{\state}) | \cup_{i \in \indexset} \randomvar{\observation}^{i} } \Delta(\state)
\end{equation} where for state $\state$ \[\Delta(\state) = \max_{\action \in \actions} \qvalue^{*}(\state, \action) - \min_{\action \in \actions} \qvalue^{*}(\state, \action)\] represents the maximum value change for different actions. 

Minimizing \eqref{eq:objfunction} encourages the selection of sensors that resolve the uncertainty about the states where the decisions of the agent change significantly the value. The conditional binary entropy \(\entropy{\mathds{1}_{\state}(\randomvar{\state}) | \cup_{i \in \indexset} \randomvar{\observation}^{i} }\) measures the uncertainty about whether the agent is at \(\state\) or not given the observations. This term goes to \(0\) if \(\probability{\randomvar{s} = \state| \cup_{i \in \indexset} \randomvar{\observation}^{i}} \to 0\) or \(\probability{\randomvar{s} = \state| \cup_{i \in \indexset} \randomvar{\observation}^{i}} \to 1\). The weighting $\Delta(s)$ measures how much the decisions of the agent change the value. Consequently, the objective function \eqref{eq:objfunction} encourages the selection of sensors that reduce uncertainty for high-stakes states, and the agent rationally does not pay attention to less valuable sensors.  

We call the objective \textit{optimistic} since it focuses on resolving state uncertainty for at the current time step. The weighting term \(\Delta(\state)\) uses $Q$-value differences myopically, inherently assuming that the agent will achieve the optimal value under perfect information (i.e., the state is known) in the subsequent steps.

We note that the term \(\entropy{\mathds{1}_{\state}(\randomvar{\state}) | \cup_{i \in \indexset} \randomvar{\observation}^{i} }\) is a monotone non-increasing in the sensor set \(\indexset\) since conditioning does not increase entropy~\cite{cover1999elements} and, hence, for each additional sensor $j$, we have \(\entropy{\mathds{1}_{\state}(\randomvar{\state}) | \cup_{i \in \indexset} \randomvar{\observation}^{i} } \geq \entropy{\mathds{1}_{\state}(\randomvar{\state}) | \cup_{i \in \indexset \cup \lbrace j \rbrace} \randomvar{\observation}^{i} }\). 

Minimizing \eqref{eq:objfunction} is a combinatorial optimization problem, which is NP-hard in general~\cite{papadimitriou1998combinatorial}. Given this monotonicity property, we propose using a greedy algorithm. 

\begin{algorithm} 
\caption{Greedy Algorithm for Rational Inattention} \label{algo:rationalinattention}
\begin{algorithmic}[1]
\REQUIRE Initial belief \( b_{t} \), cost budget $C$, observation functions $O^{1}, \ldots, O^{N}$.
\STATE Index set \( \indexset_{t} \gets \emptyset \)
\WHILE{$\sum_{i \in \indexset_{t}} c^{i} \leq C$ and $\indexset_{t} \neq \lbrace 1, \ldots, N\rbrace$}
    \STATE Select index \( j \) such that:
    \[
    j = \arg\min_{j \in \lbrace 1, \ldots, N\rbrace \setminus \indexset_{t}}  \sum_{\state \in \states} \entropy{\mathds{1}_{\state}(\randomvar{\state}_{t}) \mid \cup_{i \in \indexset_{t} \cup \{j\}} \randomvar{\observation}^{i}} \Delta(\state)
    \]
    \STATE Update \( \indexset_{t} \gets \indexset_{t} \cup \{j\}. \)
\ENDWHILE
\RETURN Final index set \( \indexset_{t} \)
\end{algorithmic}
\end{algorithm}

Other potential stopping criteria include stopping if the objective value \eqref{eq:objfunction} is below a constant value. In the next section, we show that this criterion, combined with the $Q_{\text{MDP}}$ heuristic for action selection~\cite{littman1995learning}, guarantees a near-optimal value loss compared to the perfect information case.

\begin{remark}
We note that \eqref{eq:objfunction} is not necessarily submodular due to the indicator function. Following a similar approach to the proof of Lemma 1 in \cite{ghasemi2019online}, one can also show that \eqref{eq:objfunction} is submodular and the greedy selection algorithm is approximately optimal under the additional assumption that for all belief $\belief \in \simplex^{\states}$, $\state \in \states$, $\randomvar{\state} \sim \belief$, $\observation^{i}, \observation^{j} \in \observationset$ \[\probability{\observation^{i}, \observation^{j} | \mathds{1}_{\state}(\randomvar{\state} )} = \probability{\observation^{i} | \mathds{1}_{\state}(\randomvar{\state} )} \probability{\observation^{j} | \mathds{1}_{\state}(\randomvar{\state} )}.\]
\end{remark}

\usetikzlibrary{automata,positioning}

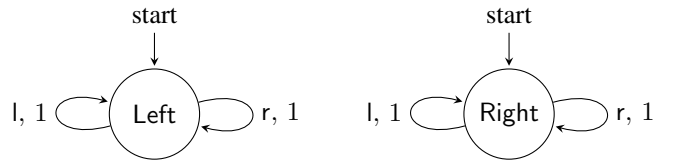
\begin{figure}[b]
    \centering
\begin{tikzpicture}[>=stealth,shorten >=1pt,auto,node distance=3.5cm,
                    state/.style={circle,draw,minimum size=1.2cm}]

  \node[state,initial,initial where=above] (up) {$\mathsf{Left}$};
  \node[state,initial,initial where=above, right=of up] (down) {$\mathsf{Right}$};

  \path (up) edge[loop left]  node {$\mathsf{l},\,1$} (up)
             edge[loop right] node {$\mathsf{r},\,1$} (up);

  \path (down) edge[loop left]  node {$\mathsf{l},\,1$} (down)
               edge[loop right] node {$\mathsf{r},\,1$} (down);

\end{tikzpicture}

    \caption{An MDP with two possible initial states. A label $a,p$ shows a transition that happens with probability $p$ under action $a$. The actions that match the state gives a reward of $1$ while the others give a reward of $0$, i.e., $\reward(\mathsf{Left}, \mathsf{l}) = \reward(\mathsf{Right},\mathsf{r}) = 1$ and $\reward(\mathsf{Left}, \mathsf{r}) = \reward(\mathsf{Right},\mathsf{l}) = 0$}
    \label{fig:beliefmismatch}
\end{figure}

\subsection{Value Loss Compared to Perfect Information} \label{sec:valueloss}

Consider that, after the sensor selection, the agent uses the updated belief $\belief_{t}'$ with the $Q_{\text{MDP}}$ heuristic~\cite{littman1995learning} to choose its action at time \(t\), i.e.,  
\begin{equation} \label{eq:qmdpactionselection}
   \action_{t} \in \arg \max_{\action \in \actions} \sum_{\state \in \states} \belief_{t}'(\state) \qvalue^{*}(\state, \action). 
\end{equation}

If the sensor selection guarantees that \eqref{eq:objfunction} is below a constant value for every time step, this action selection rule guarantees that the expected value loss of the agent is bounded compared to the perfect information case, i.e., the optimal initial state value of the MDP. 

\begin{proposition} \label{prop:expectedvalueloss}
    Let $v$ be the expected discounted return under the action selection rule \eqref{eq:qmdpactionselection} and the sensor sets $\indexset_{t}$ chosen in Algorithm \ref{algo:rationalinattention} satisfy \[\sum_{\state \in \states} \entropy{\mathds{1}_{\state}(\randomvar{\state}_{t}) | \cup_{i \in \indexset_{t}} \randomvar{\observation}^{i} } \Delta(\state) \leq \alpha \]
    for all $t\geq0$. Then,  
    \[\gamevalue^{*}(\state_{0}) - v \leq \frac{\alpha}{(1-\discount)}. \]
\end{proposition}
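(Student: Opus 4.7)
The plan is to establish a per-step expected-regret bound of $\alpha$ for the $Q_{\text{MDP}}$ action rule and then sum a discounted geometric series via a Bellman telescoping identity. The hard part is bounding the one-step suboptimality by the value-weighted entropy objective \eqref{eq:objfunction}; the telescoping step is textbook.

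First, I would fix time $t$, condition on the history $h_t$, and further condition on the observation vector $\omega_t$ drawn from $\indexset_t$. Write $p_{\state}(\omega_t) = \belief_t'(\state) = \probability{\state_t = \state \mid h_t, \omega_t}$ and let $\state^* \in \argmax_{\state} p_{\state}(\omega_t)$. Since $\action_t$ maximizes $\sum_{\state} p_{\state}(\omega_t) \qvalue^*(\state, \cdot)$ and is therefore at least as good as any $a^*_{\state^*} \in \argmax_a \qvalue^*(\state^*, a)$, subtracting the corresponding sums and noting that the $\state^*$ term vanishes gives
\begin{equation*}
L(\omega_t) := \sum_{\state} p_{\state}(\omega_t) \bigl[\max_a \qvalue^*(\state,a) - \qvalue^*(\state, \action_t)\bigr] \leq \sum_{\state \neq \state^*} p_{\state}(\omega_t) \Delta(\state).
\end{equation*}
Because $\state^*$ is a mode of $\belief_t'$, every other state satisfies $p_{\state}(\omega_t) \leq 1/2$, so $p_{\state}(\omega_t) = \min(p_{\state}(\omega_t), 1-p_{\state}(\omega_t))$. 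Applying the elementary inequality $\min(p, 1-p) \leq h(p)$ for the binary entropy $h(p) = -p\log_2 p - (1-p)\log_2(1-p)$ (provable via concavity of $h$ with $h(0)=h(1)=0$ and $h(1/2)=1$) then gives $L(\omega_t) \leq \sum_{\state} h(p_{\state}(\omega_t)) \Delta(\state)$. Averaging over $\omega_t$, and using the standing assumption that the belief matches the true state distribution so that $E_{\omega_t}[h(p_{\state}(\omega_t))] = \entropy{\mathds{1}_{\state}(\randomvar{\state}_t) \mid \cup_{i \in \indexset_t} \randomvar{\observation}^i}$, yields the per-step bound $E[\gamevalue^*(\state_t) - \qvalue^*(\state_t, \action_t) \mid h_t] \leq \alpha$.

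Second, I would invoke the Bellman telescoping identity
\begin{equation*}
\gamevalue^*(\state_0) - v = E\Biggl[\sum_{t=0}^\infty \discount^t \bigl(\gamevalue^*(\state_t) - \qvalue^*(\state_t, \action_t)\bigr)\Biggr],
\end{equation*}
which follows from $\gamevalue^*(\state_t) - \reward(\state_t, \action_t) - \discount\, E[\gamevalue^*(\state_{t+1}) \mid \state_t, \action_t] = \gamevalue^*(\state_t) - \qvalue^*(\state_t, \action_t)$ together with the vanishing $\discount^T \gamevalue^*(\state_T)$ tail under bounded rewards. Plugging in the per-step bound and summing the geometric series gives $\gamevalue^*(\state_0) - v \leq \alpha/(1-\discount)$. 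The principal obstacle is the per-step bound, specifically the conversion of $\sum_{\state \neq \state^*} p_{\state}(\omega_t) \Delta(\state)$ into the binary-entropy-weighted form of \eqref{eq:objfunction}; this relies both on the $Q_{\text{MDP}}$ argmax to cancel the mode $\state^*$ (after which every remaining $p_{\state}$ lies in $[0, 1/2]$) and on the inequality $\min(p,1-p) \leq h(p)$.
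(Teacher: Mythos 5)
Your proof is correct, and the way you obtain the per-step bound is genuinely different from (and cleaner than) the paper's. The paper proves the same one-step inequality as its Lemma \ref{lemma:stepvalueloss} via a three-case analysis on the posterior: all masses at most $1/2$; a dominant state for which the $Q_{\text{MDP}}$ action is already optimal; and a dominant state for which it is not, where the argmax property of \eqref{eq:qmdpactionselection} is invoked to transfer the dominant state's regret onto the low-probability states, after which two inequalities are added and halved — this is why the paper needs the sharper estimate $2p\le h(p)$ of Lemma \ref{lemma:entropybound}. Your device of comparing the chosen action against an optimal action of the posterior mode $\state^{*}$ collapses all three cases at once: the mode's regret term cancels exactly, every remaining state automatically has posterior mass at most $1/2$, and the weaker inequality $\min(p,1-p)\le h(p)$ suffices (indeed, plugging in the paper's $2p\le h(p)$ your route would even yield the per-step bound with an extra factor of $1/2$). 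For the accumulation over time, your performance-difference identity $\gamevalue^{*}(\state_{0})-v=\mathbb{E}[\sum_{t}\discount^{t}(\gamevalue^{*}(\state_{t})-\qvalue^{*}(\state_{t},\action_{t}))]$ is computationally the same as the paper's construction of hybrid policies $\pi_{T}$ in Lemma \ref{lemma:sim} and the telescoping $v_{T-1}-v_{T}\le\gamma^{T}\alpha$, just written in one step rather than through interpolating policies. Both arguments rest on the same caveat, which you state explicitly: the expectation defining the entropy objective takes $\randomvar{\state}_{t}$ distributed according to Player 1's belief, so the bound requires the belief to coincide with the true conditional state distribution, exactly as the paper assumes and as its Fig.~\ref{fig:beliefmismatch} counterexample shows is necessary.
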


The proof is available in the appendix. The proof of Proposition $1$ relies on bounding the expected value loss for different states. If a state has very low binary entropy, then the state either has a very low belief probability or a very high belief probability. Since the value loss due to the current decision is bounded by the $Q$-value differences, the expected value loss is small due to the states with low belief probabilities. For states with high belief probabilities, the chosen action may already be optimal, resulting in no value loss. If the action is not optimal, then due to the $Q_{\text{MDP}}$ decision rule, it is guaranteed that the expected value from the state with high belief probability is bounded since the other states with bounded expected value losses dominate the decision.

We remark that the expected value loss is with respect to the agent's initial belief. If this belief does not match the initial state distribution, then the actual expected loss of the agent may not vanish as \(c \to 0\). As an example, consider the MDP given Fig. \ref{fig:beliefmismatch}. The agent's belief is $\belief(\mathsf{left}) = 1$ and $\belief(\mathsf{right}) = 0$ while the initial state is $\mathsf{right}$. Consider that there is a single sensor such that $\observationset = \lbrace \mathsf{null} \rbrace$ and $\observationfunction(\mathsf{right},\mathsf{null}) = \observationfunction(\mathsf{left},\mathsf{null}) = 1$. If $\initialstate  \sim \belief$, then $\sum_{\state \in \states} \entropy{\mathds{1}_{\state}(\randomvar{\state}_{0}) | \cup_{i \in \indexset} \randomvar{\observation}^{i} } \Delta(\state) = 0$ since the agent is certain that it is at state $\mathsf{left}$. However, the value loss with respect to the actual initial state distribution is $1/(1-\discount)$, which is the maximum value gap for the MDP in Fig. \ref{fig:beliefmismatch}. We note that this gap is due to the confirmation bias and occurs for partially observable MDPs in general if the initial belief is not accurate. 

\section{Deceptive Deviations to Exploit Inattention Blindness}

\label{sec:exploitation}
In Section \ref{sec:blueplayer}, we discussed rational inattention to model the perception decisions of a single agent acting alone in an MDP. We now focus on how an adversarial agent can exploit this perception method to gain an advantage in a zero-sum stochastic game. In the next sections that consider the zero-sum game, we refer to the agent with rational inattention as Player 1 and the adversarial agent as Player 2. 

Analogous to the $Q_{\text{MDP}}$ heuristic given in \eqref{eq:qmdpactionselection}, for a zero-sum stochastic game, we have the following action selection rule for Player 1:
\begin{equation} \label{eq:pl1distselection}
    \action^{1}_{t} \sim d^{1,*}_{t}= \arg \max_{d^{1} \in \mathcal{D}}   \sum_{\state \in \states} \belief_{t}'(\state) \qvalue^{*}(\state, d^{1}). 
\end{equation}
where $\mathcal{D} = \lbrace d | \exists \state , d = \policy^{1,*}(\state)\rbrace$, i.e., $\mathcal{D}$ is the set of action distributions utilized by Player 1 under the equilibrium policy. In words, given a set $\mathcal{D}$ of action distributions, Player 1 chooses the distribution that maximizes the expected return assuming that Player 2 follows the equilibrium policy, and it will have perfect observations in the next steps.

Analogous to Section \ref{sec:myopiconlineperception}, we define \[\Delta(\state) = \max_{d \in \mathcal{D}} \qvalue^{*}(\state, d) - \min_{d \in \mathcal{D}} \qvalue^{*}(\state, d)\] which represents the maximum value change for state \(\state\) for different action distributions.

Let $\belief'_{t}$ be the posterior belief after observations. In addition to the observations coming from the sensors, Player 1 updates its belief $\belief'_{t}$  using its action and Player 2's policy $\policy^{2,*}$ according to Bayes' rule: 
\begin{equation}
    \belief_{t+1}(\state') = \frac{ \sum_{\state \in \states} \sum_{\action^{2} \in \actions^{2}}\belief'_{t}(\state) \policy^{2,*}(\state, \action^{2}) \transition(\state, \action^{1}, \action^{2}, \state')}{\sum_{\altstate, \state \in \states}\sum_{\action^{2} \in \actions^{2}}\belief'_{t}(\state) \policy^{2,*}(\state, \action^{2}) \transition(\state, \action^{1}, \action^{2}, \altstate) } \label{eq:actbeliefupdate}
\end{equation}

\textbf{Confirmation Bias Leading to Inattention Blindness: }\textit{Inattentional blindness}~\cite{mack1998inattentional} is a psychological phenomenon in which individuals fail to recognize major and unexpected changes in their environment because they do not pay attention to the changes happening. In our framework, Player 1 may fail to realize that its belief is inaccurate because it may not perform enough perception, or if the received observations match the existing incorrect beliefs due to the confirmation bias.

Consider that Player 2 follows a fixed policy in the zero-sum game. In this case, the environment is an MDP from the perspective of Player 1. If Player 1 updates its belief according to this policy, then the performance guarantee given in Proposition \ref{prop:expectedvalueloss} holds for Player 1. However, if Player 2 employs a different policy, then Player 2 can gain an advantage. Player 1 may not even be aware of this deviation, i.e., have inattention blindness, resulting from misspecified priors and incorrect dynamics used in the belief updates. 

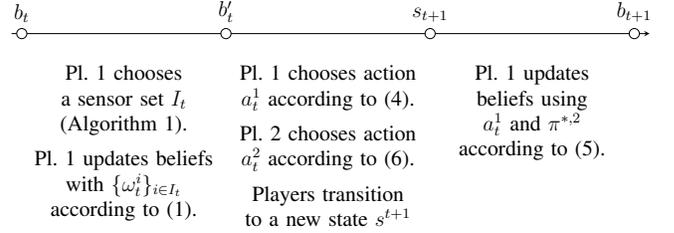
\begin{figure}[t]
\centering
\resizebox{\columnwidth}{!}{%
\begin{tikzpicture}[>=stealth, node distance=2.5cm, every node/.style={font=\large}]
    \draw[->] (-0.2,0) -- (12.3,0);

    \node[circle,draw,fill=white,inner sep=2pt,label=above: $b_{t}$] (b0) at (0,0) {};
    \node[circle,draw,fill=white,inner sep=2pt,label=above:$b_{t}'$] (b1) at (4,0) {};
    \node[circle,draw,fill=white,inner sep=2pt,label=above:$s_{t+1}$] (s) at (8,0) {};
    \node[circle,draw,fill=white,inner sep=2pt,label=above:$b_{t+1}$] (b2) at (12,0) {};

    \node[below=0.5cm of $(b0)!0.5!(b1)$,align=center,text width=3.6cm] 
        {Pl. 1 chooses a sensor set $\indexset_{t}$\\(Algorithm \ref{algo:rationalinattention}). \\\vspace{0.2cm} Pl. 1 updates beliefs with $\lbrace \observation^{i}_{t}\rbrace_{i \in \indexset_{t}}$ according to \eqref{eq:obsbeliefupdate}. };
    \node[below=0.5cm of $(b1)!0.5!(s)$,align=center,text width=3.5cm] 
        {Pl. 1 chooses action $\action^{1}_{t}$ according to \eqref{eq:pl1distselection}. \\ \vspace{0.2cm} Pl. 2 chooses action $\action^{2}_{t}$ according to \eqref{eq:pl2distselection}.
        \\
        \vspace{0.2cm}
        Players transition to a new state $\state^{t+1}$};
    \node[below=0.5cm of $(s)!0.5!(b2)$,align=center,text width=3.5cm] 
        {Pl. 1 updates beliefs using $\action^{1}_{t}$ and $\policy^{*,2}$ according to \eqref{eq:actbeliefupdate}.};
\end{tikzpicture}%
}
\caption{Sensor and action selection timeline for the players.}
\label{fig:decision-timeline}
\end{figure}

For example, consider the zero-sum game given in Fig. \ref{fig:beliefmismatchgame}. The security policy for Player 2 takes action $r$ with probability $1$ at start and the value of the game is $(1-\epsilon) \frac{\gamma}{1-\gamma}$. Consider that there are two sensors: the first one deterministically outputs $\mathsf{True}$ if the state is $\mathsf{LU}$ or $\mathsf{RU}$ and $\mathsf{False}$ otherwise, the second one deterministically outputs $\mathsf{True}$ if the state is $\mathsf{LU}$ and $\mathsf{LD}$ and $\mathsf{False}$ otherwise. Assuming that Player 2 took action $\mathsf{r}$ at the start, Player 1 needs to decide whether the state is $\mathsf{RU}$ or $\mathsf{RD}$, and given the observation from the first source, the belief entropy is $0$. On the other hand, the second sensor does not lower the state uncertainty. Instead, if Player $2$ takes action $\mathsf{l}$ at the start, inducing false beliefs, then the expected discounted return is $0$ since Player $1$ takes the other action, giving a reward of $0$. While Player $2$ deviates from the assumed policy, the observations that Player $1$ receives from sensor $1$ conform with the prior.

\begin{figure}[t]
\centering
\resizebox{\linewidth}{!}{%
\begin{tikzpicture}[>=stealth,shorten >=1pt,auto,
                    state/.style={circle,draw,minimum size=0.9cm,node distance=1.3cm}]

  \node[state] (l1) {$\mathsf{LU}$};
  \node[state,below=of l1] (l2) {$\mathsf{LD}$};

  \node[state,right=6cm of l1] (r1) {$\mathsf{RU}$};
  \node[state,below=of r1] (r2) {$\mathsf{RD}$};

  \path (l1) -- (l2) coordinate[midway] (lmid);
  \path (r1) -- (r2) coordinate[midway] (rmid);
  \node[state,initial where=above,initial] (start) at ($(lmid)!0.5!(rmid)$) {$\mathsf{Start}$};

  \path (start) edge[->] node[sloped,below=-2pt] {$\mathsf{a}, \mathsf{l}, 0.5$} (l1)
                edge[->,bend left=0]  node[sloped, above=-2pt] {$\mathsf{a}, \mathsf{l}, 0.5$} (l2);   %

  \path (start) edge[->] node[sloped,below=-2pt] {$\mathsf{a}, \mathsf{r}, 0.5$} (r1)
                edge[->,bend right=0] node[sloped, above=-2pt] {$\mathsf{a}, \mathsf{r}, 0.5$} (r2);

  \foreach \i in {1,...,2}{
    \path (l\i) edge[loop left]  node {$\mathsf{l}, \mathsf{a}, 1$} (l\i)
                edge[loop right] node {$\mathsf{r}, \mathsf{a}, 1$} (l\i);
  }

  \foreach \i in {1,...,2}{
    \path (r\i) edge[loop left]  node {$\mathsf{l}, \mathsf{a}, 1$} (r\i)
                edge[loop right] node {$\mathsf{r}, \mathsf{a}, 1$} (r\i);
  }

\end{tikzpicture}%
}

    \caption{A two-player stochastic game. A label $a^{1}, a^{2},p$ shows a transition that happens with probability $p$ under actions $a^{1}$ and $a^{2}$. For some $\epsilon \in (0,1)$, the rewards are \parbox{\linewidth}{\centering
$\reward(\mathsf{LU}, \mathsf{l}, \mathsf{a}) = \reward(\mathsf{LD}, \mathsf{r}, \mathsf{a}) = 1$,\\
$\reward(\mathsf{RU}, \mathsf{r}, \mathsf{a}) = \reward(\mathsf{RD}, \mathsf{l}, \mathsf{a}) = 1 - \epsilon$,\\
$0$ for others.}
}
\label{fig:beliefmismatchgame}
\end{figure}
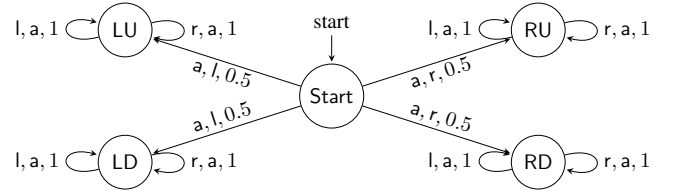

\subsection{Myopic Deceptive Planning to Exploit Incorrect Beliefs}

Consider that Player $2$ knows the observations received by Player $1$ and knows that Player $1$ assumes $\policy^{2,*}$ as Player $2$'s policy. Given this knowledge, Player $2$ can deviate from the equilibrium policy $\policy^{2,*}$ to gain an advantage. 

Given Player $1$'s belief $b^{*}_{t}$, Player $2$ follows the action selection rule:
\begin{subequations} \label{eq:pl2distselection}
    \begin{align}
    &\text{Compute $d^{1,*}_{t}$ according to \eqref{eq:pl1distselection}}.
    \\
        &d^{2,*}_{t} = \arg \min_{d^{2} \in \simplex^{\actions^{2}}} \sum_{\action^{1} \in \actions^{1}} \sum_{\action^{2} \in \actions^{2}} d^{1,*}_{t}(\action^{1})  d^{2}(\action^{2}) \nonumber
        \\
        &\quad \quad \quad \quad \quad \left(\reward(\state_{t}, \action^{1}, \action^{2}) + \mathbb{E}[\gamevalue^{*}(\state_{t+1}) | \action^{1}, \action^{2}]\right)
        \\
                &\action^{2}_{t} \sim  d^{2,*}_{t}
    \end{align}
\end{subequations}

Combining sensor and action selection mechanisms together, we have the timeline described in Fig. \ref{fig:decision-timeline}. Player $2$ maximizes the expected return assuming that the players will play the equilibrium policies in the following timesteps. Under these mechanisms, Player $2$'s deviations cannot decrease its expected return compared to the security value.

\begin{proposition} \label{prop:deceptiongain}
Let $\nu$ be the expected discounted return of Player $1$ under the perception and action decision rules defined in Fig. \ref{fig:decision-timeline}. Then, $\nu \leq \gamevalue^{*}(\initialstate)$.
\end{proposition}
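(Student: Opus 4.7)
The plan is to prove $\nu \leq \gamevalue^{*}(\initialstate)$ by a telescoping argument on the potential
\[
W_t = \sum_{k=0}^{t-1} \discount^{k}\, \mathbb{E}[\reward(\state_{k}, \action^{1}_{k}, \action^{2}_{k})] + \discount^{t}\, \mathbb{E}[\gamevalue^{*}(\state_{t})],
\]
where all expectations are under the joint law induced by the perception, action, and transition mechanisms of Fig.~\ref{fig:decision-timeline}. Since rewards are uniformly bounded by $R_{max}$, we have $|\gamevalue^{*}| \leq R_{max}/(1-\discount)$, so the tail term $\discount^{t}\mathbb{E}[\gamevalue^{*}(\state_{t})]$ vanishes and $W_{t}\to \nu$; meanwhile $W_{0} = \gamevalue^{*}(\initialstate)$. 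It therefore suffices to establish $W_{t+1} \leq W_{t}$ for every $t \geq 0$.

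To do so, I would first invoke Shapley's Bellman equation for the zero-sum stochastic game's minimax value,
\[
\gamevalue^{*}(\state) = \max_{d^{1} \in \simplex^{\actions^{1}}} \min_{d^{2} \in \simplex^{\actions^{2}}} \sum_{\action^{1},\action^{2}} d^{1}(\action^{1})\, d^{2}(\action^{2}) \Bigl[\reward(\state, \action^{1}, \action^{2}) + \discount \sum_{\state'} \transition(\state, \action^{1}, \action^{2}, \state')\, \gamevalue^{*}(\state')\Bigr],
\]
which implies, for every state $\state$ and \emph{every} distribution $d^{1} \in \simplex^{\actions^{1}}$, that the inner minimization on the right is at most $\gamevalue^{*}(\state)$. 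Player 2's rule \eqref{eq:pl2distselection} selects $d^{2,*}_{t}$ as exactly the argmin on that right-hand side with $\state = \state_{t}$ and $d^{1} = d^{1,*}_{t}$. Conditioning on the pair $(\state_{t}, d^{1,*}_{t})$ and using that $\action^{1}_{t}$ and $\action^{2}_{t}$ are drawn independently from $d^{1,*}_{t}$ and $d^{2,*}_{t}$, I would then obtain
\[
\mathbb{E}\bigl[\reward(\state_{t}, \action^{1}_{t}, \action^{2}_{t}) + \discount\, \gamevalue^{*}(\state_{t+1}) \,\big|\, \state_{t}, d^{1,*}_{t}\bigr] \leq \gamevalue^{*}(\state_{t}).
\]
Taking iterated expectation and multiplying by $\discount^{t}$ rearranges into $W_{t+1} \leq W_{t}$, which concludes the argument.

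The hard part will be the conditional-expectation bookkeeping: $d^{1,*}_{t}$ is measurable with respect to Player 1's belief trajectory (a function of past observations, sensor choices, and actions), while $\state_{t}$ is the underlying true state, and these two objects are generally dependent. The point to articulate carefully is that once one conditions on the pair $(\state_{t}, d^{1,*}_{t})$, Player 2's distribution $d^{2,*}_{t}$ is a deterministic function of this pair, and the remaining randomness in $(\action^{1}_{t}, \action^{2}_{t}, \state_{t+1})$ factorizes exactly into the bilinear action sum against the transition kernel appearing inside Shapley's equation. After this filtration point is settled, the proof is a one-line application of the min--max inequality combined with telescoping.
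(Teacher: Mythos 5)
Your proof is correct and takes essentially the same approach as the paper: your key step --- that for \emph{any} action distribution of Player 1, Player 2's best response in \eqref{eq:pl2distselection} yields at most the security value $\gamevalue^{*}(\state_{t})$ at the true current state, by Shapley's equation --- is exactly the paper's one-step improvement argument, and your telescoping potential $W_{t}$ is a rigorous rendering of the paper's informal recursion over ``play the actual rules for $t$ steps, then both play equilibrium'' continuations (the same hybrid-policy device the paper makes explicit in Lemma~\ref{lemma:sim} for Proposition~\ref{prop:expectedvalueloss}). One minor observation: the paper's \eqref{eq:pl2distselection} omits the factor $\discount$ in front of $\mathbb{E}[\gamevalue^{*}(\state_{t+1})]$ (evidently a typo), and your argument correctly uses the $\discount$-discounted form that the proposition requires.
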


The proof relies on the fact that at any time step, deviations of Player $2$ guarantee a value better than the security value for itself at the current state. Since the value does not get worse than the security value at any future time step, Player $2$'s expected return is better than the security value for the initial state. The complete proof is available in the appendix.

The decision-making rule described in \eqref{eq:pl1distselection} is myopic, as it maximizes the expected return assuming that Player $1$ will have perfect state information in the next time steps. In other words, \eqref{eq:pl2distselection} is a model predictive control method~\cite{rawlings2020model} using a decision window of $1$ with the security values as the terminal costs. One can extend the decision window to $T$ steps; however, this non-myopic approach has a computational complexity that exponentially grows with $T$ due to the possible realizations of states, actions, and observations for different timesteps. Therefore, the myopic approach is tractable. We remark that the myopic decision-making rule results in the security policy for the example given in Fig. \ref{fig:beliefmismatchgame}. A decision window of $2$ would result in Player $2$ taking action $\mathsf{l}$ at $\mathsf{Start}$.

\section{Numerical Experiments}

We demonstrate the proposed deception model in two examples. The first example shows the behavior in a grid-world setting. The second example uses randomly generated games to quantitatively evaluate the performance of different online perception methods for Player 1 against different action selection methods of Player 2.

\subsection{Line  defense with coordinate sensors}
In this example, we consider two players in an $11\times 11$ grid-world shown in Fig. \ref{fig:trajs}. Let $(x^{i}_{t}, y^{i}_{t})$ represent the cell of player $i$ at time $t$. Player 1 starts from the blue cell $(6,1)$, and Player 2 starts from the red cell $(7,11)$. Player $1$ can only move horizontally on the top row, which implies $y^{1}_{t} = 11$ for all $t\geq 0$. Player 2 can move in all neighboring cells, including the diagonal ones. At each time step, a player moves to their target cell with probability $0.9$ and stays at its current cell with probability $0.1$. The game ends (i.e., the players transition to an absorbing state with no reward) after Player $2$ reaches the top row, i.e., $x^{2}_{t} = 11$. Player $1$'s goal is to capture Player $2$ at the top row. Player 2's goal is to reach the top row while having the maximum distance from Player $1$. Let $(x^{1}, y^{1}, x^{2}, y^{2})$ represent the state of the game. Formally, the reward is defined as $\reward((x^{1}, 11, x^{2}, 11), a^{1}, a^{2}) = -|x^{1} - x^{2}|$ and $\reward((x^{1}, 11, x^{2}, y^{2}), a^{1}, a^{2}) = 0$ for $y^{2} \neq 11$. The discount factor is $\discount = 0.99$. The value of the game is $\gamevalue^{*}(\initialstate) = -0.894$ when both players have perfect observations of the game state.

Player $1$ has two sensors available indexed with $1$ and $2$: $1)$ a sensor outputting Player $2$'s $x$ location $x^{2}_{t}$, $2)$ a sensor outputting Player 2's $y$ location $y^{2}_{t}$. Each of these two sensors outputs the true location with probability $0.7$ and adjacent locations in the same coordinate with probability $0.3$. The costs of these sensors are equal, $c^{1} = c^{2}$. Other than these two sensors, at all time steps, Player $1$ knows its own state, i.e., there exists a free sensor with index $3$ such that $\observationfunction^{3}(((x^{1}, y^{1}, x^{2}, y^{2})), (x^{1}, y^{1})) = 1$ and $\observationcost^{3} = 0$.

\begin{figure}[t]
        \centering
        \includegraphics[width=0.4\columnwidth]{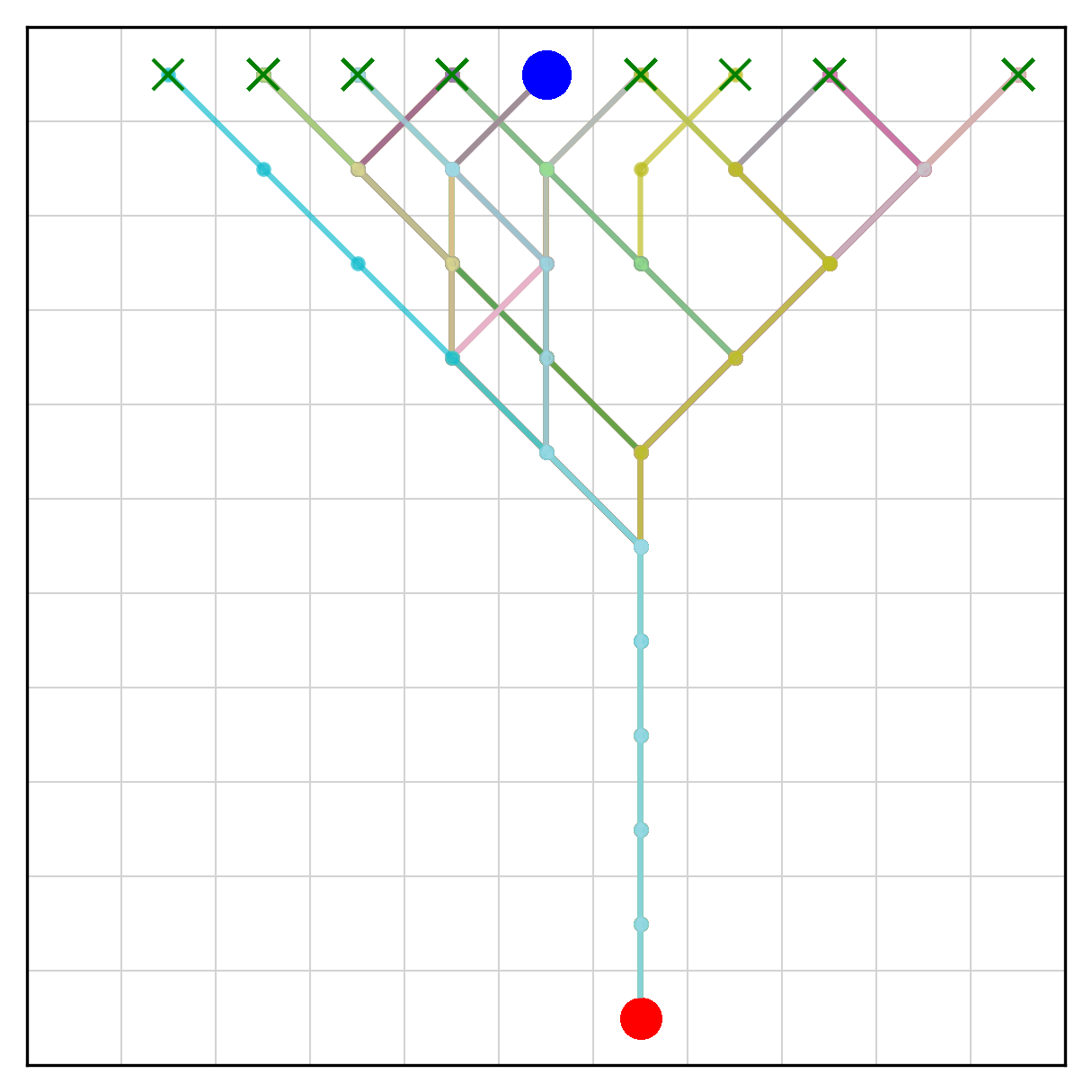}
\hfill
        \centering
        \includegraphics[width=0.4\columnwidth]{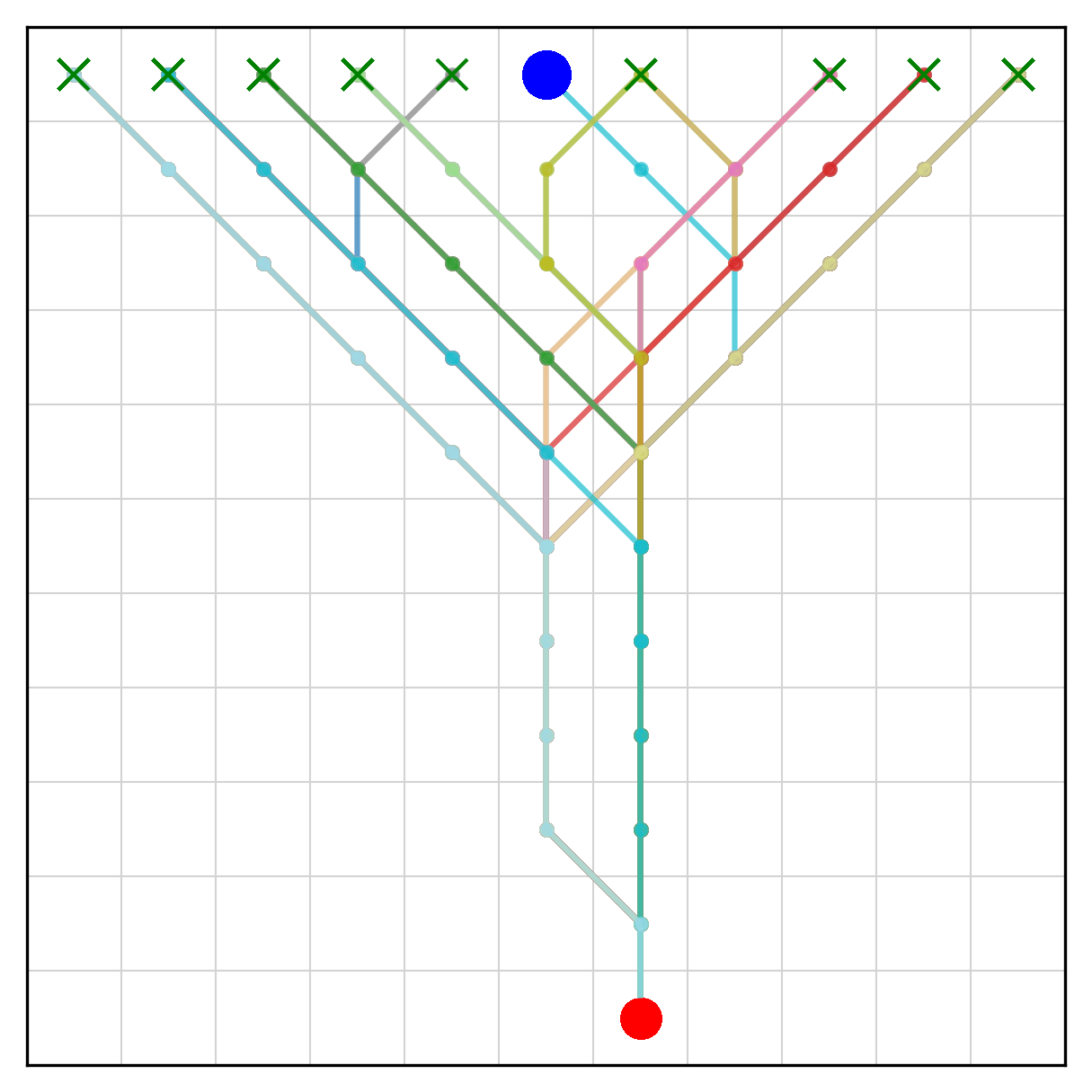}
\caption{(Left) Player 2 uses $\pi^{2,*}$, (Right) Player 2 uses \eqref{eq:pl2distselection} for action selection. 100 sample game runs for Player 2. Red dot indicates Player 1's start location, blue dot indicates Player 2's start location, and green crosses indicate the end.}
\label{fig:trajs}
\end{figure}

\newcommand{\heatmapTickSize}{\fontsize{16pt}{18pt}\selectfont}
\newcommand{\heatmapLabelSize}{\fontsize{24pt}{24pt}\selectfont}
\newcommand{\heatmapTitleSize}{\fontsize{30pt}{30pt}\selectfont}

\pgfplotsset{
  every axis/.append style={
    tick label style={font=\heatmapTickSize},
    label style={font=\heatmapLabelSize},
    title style={font=\heatmapTitleSize}
  }
}

\begin{figure*}[t]


\centering

\foreach \t in {5,6,7,8,9,10} {
    \begin{subfigure}{0.155\textwidth}
        \centering
        \resizebox{\textwidth}{!}{\input{figs/belief_distr_baseline/heatmap_t\t.tex}}
        \includegraphics[width=\textwidth, height=0.05\textwidth]{figs/belief_distr_baseline/heatmap_t\t_sensors.png}
        \vspace{0cm}

    \end{subfigure}
}
\foreach \t in {5,6,7,8,9,10} {
    \begin{subfigure}{0.155\textwidth}
        \centering
        \resizebox{\textwidth}{!}{\input{figs/belief_distr/heatmap_t\t.tex}}
        \includegraphics[width=\textwidth, height=0.05\textwidth]{figs/belief_distr/heatmap_t\t_sensors.png}
    \end{subfigure}
}

\caption{(Top) Player 2 uses $\pi^{2,*}$, (Bottom) Player 2 uses \eqref{eq:pl2distselection} for action selection. Conditional beliefs for the $x$ position (confusion matrices) and sensor choices (bar plots) of Player 1 at different time steps. Each (non-white) column of the heatmap is the average belief of Player 1 about Player 2's $x$ position conditioned on an actual $x$ position of Player 2. The intensity of the diagonal line shows the accuracy of the belief. The bar plots show the distribution of the chosen sensors, where red is the sensor for the $x$ position and blue is for the $y$ position. The demonstrated values are estimated using $10^3$ game runs.}
\label{fig:heatmaps}

\end{figure*}

In addition to sensor $3$, Player $1$ can only use one of the sensors $1$ and $2$ due to the cost constraint since $C \leq c^{1} + c^{2}$. Player $2$ greedily chooses the additional sensor according to Algorithm \ref{algo:rationalinattention} and takes actions according to \eqref{eq:pl1distselection}.

In this setting, we consider two different policies for Player $2$. In the first case, Player 2 uses the equilibrium policy $\pi^{2,*}$ and in the second case Player $2$ uses \eqref{eq:pl2distselection} for action selection. We sample $10^3$ game runs for each of these cases. Fig. \ref{fig:trajs} shows $100$ of these game runs for Player $2$ in each case. In the first case, the estimated discounted return is $-0.958$. While Player $1$ lacks information, the return is only slightly worse than the value under perfect information. In the second case, the estimated discounted return is $-2.876$, indicating the gain for Player $2$. This gain aligns with the behavior observed in Fig. \ref{fig:trajs}. Under the equilibrium policy Player $2$ does not make horizontal moves in the earlier stages of the game since Player $1$ has time to cover these moves by moving in the same direction. Instead, Player $2$ makes random horizontal moves in the later stages of the game when Player $1$ does not have time. On the other hand, in the second case, Player $2$ deviates from the assumed policy and makes horizontal moves exploiting the beliefs of Player $1$. These early moves, which give Player $2$ an advantage, are unnoticed since Player $1$ expects Player $2$ not to move horizontally and therefore does not cover these moves. 

Fig. \ref{fig:heatmaps} shows this effect in more detail. Until time $t=6$, Player $2$ rarely activates the $x$ sensor. As a result, if Player $2$ makes an unexpected horizontal move, Player $1$ has an inaccurate belief. For example, at time steps $t=7,8,9,10$, we observe that while Player $2$ is near the edges, Player $1$ believes that Player $2$ is at the middle line. Furthermore, at time step $t=8$, we observe that even though the sensor provides the correct $x$ position of Player $2$ with high probability, these observations do not refine the belief because the behavior of Player $2$ is incorrectly modeled in the belief update. As a result, these accurate observations are treated as noisy signals of the presumed state and are effectively ignored. Player $1$ converges to more accurate beliefs over time with more usage of the $x$ sensor. However, the inattention blindness happening in the earlier stages results in delayed horizontal moves and loss of value for Player $1$.

\subsection{Randomly generated games}

In this example, we use 100 randomly generated games. Each game has 10 states, 4 actions for each player, 10 sensors, and 2 possible observations for each sensor. For each state \(\state\) and action pair \(\action^{1}, \action^{2}\), the transition probability distribution \([\transition(\state, \action^{1}, \action^{2}, \state_{1}), \ldots, \transition(\state, \action^{1}, \action^{2}, \state_{10})]\) is sampled from the 10-dimensional probability simplex uniformly randomly, and the reward \(\reward(\state, \action^{1}, \action^{2})\) is sampled from $[0,1]$ uniformly randomly. Similarly, for each state \(\state\) and sensor \(i\), the observation distribution \([\transition(\state, \observation^{i}_{1}), \transition(\state, \observation^{i}_{2})]\) is sampled from the 2-dimensional probability simplex uniformly randomly. The initial state is chosen uniformly randomly, and Player $1$ knows the initial state, i.e., its initial belief is a Dirac distribution. The discount factor is $0.9$.

Player 1 has the following sensor selection methods:
\begin{enumerate}
    \item Perfect information: Player \(1\) knows the state.
    \item Greedy weighted Bernouilli entropy: Use Algorithm \ref{algo:rationalinattention} to choose $k$ sensors.
    \item Greedy non-weighted entropy~\cite{ghasemi2019online}: Greedily minimizes \(\entropy{\randomvar{\state} | \cup_{i \in \indexset} \randomvar{\observation}^{i}}\) to choose $k$ sensors.
    \item Random: Uniformly randomly choose $k$ sensors.
    \item No observations: Player \(1\) only relies on its actions and the assumed policy of Player \(2\) for belief updates.
\end{enumerate}

For all methods, Player \(1\) uses the action selection rule \eqref{eq:pl1distselection}. For the first case, \eqref{eq:pl1distselection} generates the equilibrium policy $\policy^{1,*}$ since the equilibrium policy for Player 1 is optimal against the equilibrium policy for Player 2 at each state.  Also note that for the first case, \eqref{eq:pl2distselection} generates the equilibrium policy $\policy^{2,*}$ since the equilibrium policy for Player 2 is optimal against the equilibrium policy for Player 1 at each state. 

We use two different action selection rules for Player $1$ for each of the sensor selection methods:
\begin{enumerate}
    \item Equilibrium: Player $2$ follows  $\pi^{2,*}$.
    \item Belief exploitation: Player $2$ uses \eqref{eq:pl2distselection}.
\end{enumerate}

For each pair of sensor and action selection methods, we sample 100 game runs to estimate the discounted returns. In Fig. \ref{fig:returns}, we report the estimated discounted returns for Player 1. We observe that, as theoretically expected, perfect state information yields the highest return, and an increasing number of observations improves the returns for Player 1. We observe that the weighted Bernoulli entropy minimization (Algorithm \ref{algo:rationalinattention}) outperforms the non-weighted entropy minimization and random selection of sensors. We observe that regardless of the perception method, the action selection method based on beliefs, \eqref{eq:pl2distselection}, improves the returns for Player 2, matching the theoretical result given in Proposition \ref{prop:deceptiongain}.

\begin{figure}
    \centering
\begin{tikzpicture}

\definecolor{eqcolor}{RGB}{0, 0, 255}   %
\definecolor{becolor}{RGB}{255, 0, 0}    %
\begin{axis}[
    ybar,
    width=\linewidth,
    height=4.6cm,
    bar width=6pt,
    enlarge x limits=0.12,
    ylabel={Est. Player 1 Return},
    y label style={font=\footnotesize},
    symbolic x coords={
        Perfect State,
        Wt. (k=2),
        Non-wt. (k=2),
        Random (k=2),
        Wt. (k=1),
        Non-wt. (k=1),
        Random (k=1),
        No obs.
    },
    xtick=data,
    x tick label style={rotate=45, anchor=east, font=\footnotesize},
    y tick label style={font=\footnotesize},
    ymin=0,
    ymax=8,    
    tick pos=left,        
    ytick pos=left,     
    xtick pos=bottom,   
    legend style={at={(1,1)}, anchor=north east,legend columns=-1,    column sep=10pt,       draw=none, font=\small,},
    nodes near coords,
    nodes near coords style={font=\small, rotate=90, anchor=west},
        cycle list={{fill=eqcolor},{fill=becolor}}, 
]

\addplot coordinates {
    (Perfect State, 4.977)
    (Wt. (k=2), 4.698)
    (Non-wt. (k=2), 4.683)
    (Random (k=2), 4.659)
    (Wt. (k=1), 4.635)
    (Non-wt. (k=1), 4.633)
    (Random (k=1), 4.601)
    (No obs., 4.562)
};

\addplot coordinates {
    (Perfect State, nan)
    (Wt. (k=2), 3.052)
    (Non-wt. (k=2), 3.012)
    (Random (k=2), 2.960)
    (Wt. (k=1), 3.051)
    (Non-wt. (k=1), 2.946)
    (Random (k=1), 2.897)
    (No obs., 2.842)
};

\legend{Eq., Bl. Expl.}
\end{axis}
\end{tikzpicture}
\caption{Estimated discounted returns under different sensor selection methods for Player 1 and action selection methods for Player 2. Note that equilibrium and belief exploit policies are the same for Player 2 when Player 1 has the perfect state information.} 
    \label{fig:returns}
\end{figure}


\section{Conclusions}
We considered a deceptive planning framework based on rational inattention and inattention blindness, where two players interact in a zero-sum stochastic game. We proposed a rational inattention model for Player 1 for online perception, where Player 1 online chooses sensors of high value. We show that if Player 1 has accurate beliefs about the state, then this online perception method, combined with a simple action selection heuristic, results in a bounded loss compared to the case with perfect state information. Then, we considered an action selection method for Player 2 to deceive Player 1 by exploiting its beliefs. Deviations of Player 2 from the presumed policy by Player 1 lead to unnoticed incorrect beliefs for Player 1, leading to inattentional blindness. In future work, we aim to develop methods for Player $2$ that consider longer planning horizons to induce, maintain, and exploit inattention blindness.

\bibliographystyle{IEEEtran}
\bibliography{refs}

\begin{thebibliography}{10}
\providecommand{\url}[1]{#1}
\csname url@samestyle\endcsname
\providecommand{\newblock}{\relax}
\providecommand{\bibinfo}[2]{#2}
\providecommand{\BIBentrySTDinterwordspacing}{\spaceskip=0pt\relax}
\providecommand{\BIBentryALTinterwordstretchfactor}{4}
\providecommand{\BIBentryALTinterwordspacing}{\spaceskip=\fontdimen2\font plus
\BIBentryALTinterwordstretchfactor\fontdimen3\font minus \fontdimen4\font\relax}
\providecommand{\BIBforeignlanguage}[2]{{%
\expandafter\ifx\csname l@#1\endcsname\relax
\typeout{** WARNING: IEEEtran.bst: No hyphenation pattern has been}%
\typeout{** loaded for the language `#1'. Using the pattern for}%
\typeout{** the default language instead.}%
\else
\language=\csname l@#1\endcsname
\fi
#2}}
\providecommand{\BIBdecl}{\relax}
\BIBdecl

\bibitem{sims2003implications}
C.~A. Sims, ``Implications of rational inattention,'' \emph{Journal of monetary Economics}, vol.~50, no.~3, pp. 665--690, 2003.

\bibitem{mack1998inattentional}
A.~Mack and I.~Rock, ``Inattentional blindness: Perception,'' \emph{Visual attention}, no.~8, p.~55, 1998.

\bibitem{klayman1995varieties}
J.~Klayman, ``Varieties of confirmation bias,'' \emph{Psychology of learning and motivation}, vol.~32, pp. 385--418, 1995.

\bibitem{littman1995learning}
M.~L. Littman, A.~R. Cassandra, and L.~P. Kaelbling, ``Learning policies for partially observable environments: Scaling up,'' in \emph{Machine Learning Proceedings}.\hskip 1em plus 0.5em minus 0.4em\relax Elsevier, 1995, pp. 362--370.

\bibitem{mackowiak2023rational}
B.~Ma{\'c}kowiak, F.~Mat{\v{e}}jka, and M.~Wiederholt, ``Rational inattention: A review,'' \emph{Journal of Economic Literature}, vol.~61, no.~1, pp. 226--273, 2023.

\bibitem{hebert2017rational}
B.~H{\'e}bert, M.~Woodford \emph{et~al.}, \emph{Rational inattention and sequential information sampling}.\hskip 1em plus 0.5em minus 0.4em\relax National Bureau of Economic Research, 2017, vol. 23787.

\bibitem{shafieepoorfard2016rationally}
E.~Shafieepoorfard, M.~Raginsky, and S.~P. Meyn, ``Rationally inattentive control of {Markov} processes,'' \emph{SIAM Journal on Control and Optimization}, vol.~54, no.~2, pp. 987--1016, 2016.

\bibitem{shafieepoorfard2017rationally}
E.~Shafieepoorfard and M.~Raginsky, ``Rationally inattentive {Markov} decision processes over a finite horizon,'' in \emph{Asilomar Conference on Signals, Systems, and Computers}.\hskip 1em plus 0.5em minus 0.4em\relax IEEE, 2017, pp. 621--627.

\bibitem{araya2010pomdp}
M.~Araya, O.~Buffet, V.~Thomas, and F.~Charpillet, ``A pomdp extension with belief-dependent rewards,'' \emph{Advances in neural information processing systems}, vol.~23, 2010.

\bibitem{spaan2015decision}
M.~T. Spaan, T.~S. Veiga, and P.~U. Lima, ``Decision-theoretic planning under uncertainty with information rewards for active cooperative perception,'' \emph{Autonomous Agents and Multi-Agent Systems}, vol.~29, no.~6, pp. 1157--1185, 2015.

\bibitem{satsangi2018exploiting}
Y.~Satsangi, S.~Whiteson, F.~A. Oliehoek, and M.~T. Spaan, ``Exploiting submodular value functions for scaling up active perception,'' \emph{Autonomous Robots}, vol.~42, no.~2, pp. 209--233, 2018.

\bibitem{ghasemi2019online}
M.~Ghasemi and U.~Topcu, ``Online active perception for partially observable {Markov} decision processes with limited budget,'' in \emph{Conference on Decision and Control}.\hskip 1em plus 0.5em minus 0.4em\relax IEEE, 2019, pp. 6169--6174.

\bibitem{krause2007near}
A.~Krause and C.~Guestrin, ``Near-optimal observation selection using submodular functions,'' in \emph{AAAI Conference on Artificial Intelligence}, vol.~7, 2007, pp. 1650--1654.

\bibitem{masters2017deceptive}
P.~Masters and S.~Sardina, ``Deceptive path-planning.'' in \emph{International Joint Conference on Artificial Intelligence}, 2017, pp. 4368--4375.

\bibitem{karabag2021deception}
M.~O. Karabag, M.~Ornik, and U.~Topcu, ``Deception in supervisory control,'' \emph{IEEE Transactions on Automatic Control}, vol.~67, no.~2, pp. 738--753, 2021.

\bibitem{rostobaya2023deception}
V.~Rostobaya, Y.~Guan, J.~Berneburg, M.~Dorothy, and D.~Shishika, ``Deception by motion: The eater and the mover game,'' \emph{IEEE Control Systems Letters}, vol.~7, pp. 3157--3162, 2023.

\bibitem{suttle2025value}
W.~A. Suttle, J.~Milzman, M.~O. Karabag, B.~M. Sadler, and U.~Topcu, ``Value of information-based deceptive path planning under adversarial interventions,'' \emph{arXiv preprint arXiv:2503.24284}, 2025.

\bibitem{10.5555/3463952.3464050}
Z.~Liu, Y.~Yang, T.~Miller, and P.~Masters, ``Deceptive reinforcement learning for privacy-preserving planning,'' in \emph{International conference on autonomous agents and multiagent systems}, 2021, p. 818–826.

\bibitem{lewis2023deceptive}
A.~Lewis and T.~Miller, ``Deceptive reinforcement learning in model-free domains,'' in \emph{International Conference on Automated Planning and Scheduling}, vol.~33, 2023, pp. 587--595.

\bibitem{chen2024deceptive}
S.~Chen, Y.~Savas, M.~O. Karabag, B.~M. Sadler, and U.~Topcu, ``Deceptive planning for resource allocation,'' in \emph{American Control Conference}, 2024, pp. 4188--4195.

\bibitem{karabag2022exploiting}
M.~O. Karabag, M.~Ornik, and U.~Topcu, ``Exploiting partial observability for optimal deception,'' \emph{IEEE Transactions on Automatic Control}, vol.~68, no.~7, pp. 4443--4450, 2022.

\bibitem{ma2024covert}
H.~Ma, C.~Shi, S.~Han, M.~R. Dorothy, and J.~Fu, ``Covert planning against imperfect observers,'' in \emph{International conference on autonomous agents and multiagent systems}, 2024, pp. 1319--1327.

\bibitem{fu2022almost}
J.~Fu, ``On almost-sure intention deception planning that exploits imperfect observers,'' in \emph{International conference on decision and game theory for security}.\hskip 1em plus 0.5em minus 0.4em\relax Springer, 2022, pp. 67--86.

\bibitem{cover1999elements}
T.~M. Cover, \emph{Elements of information theory}.\hskip 1em plus 0.5em minus 0.4em\relax John Wiley \& Sons, 1999.

\bibitem{papadimitriou1998combinatorial}
C.~H. Papadimitriou and K.~Steiglitz, \emph{Combinatorial optimization: algorithms and complexity}.\hskip 1em plus 0.5em minus 0.4em\relax Courier Corporation, 1998.

\bibitem{rawlings2020model}
J.~B. Rawlings, D.~Q. Mayne, M.~Diehl \emph{et~al.}, \emph{Model predictive control: theory, computation, and design}.\hskip 1em plus 0.5em minus 0.4em\relax Nob Hill Publishing Madison, WI, 2020, vol.~2.

\end{thebibliography}

\onecolumn

\section*{Appendix: Proofs}

\begin{lemma} \label{lemma:entropybound}
    Define the binary entropy function $h(p) = -p\log_{2}(p) - (1-p)\log_{2}(1-p)$. Then, $p \leq h(p)/2$ for all $p \in [0, 1/2]$.
\end{lemma}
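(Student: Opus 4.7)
The plan is to reduce the claim $p \le h(p)/2$ to the nonnegativity of an auxiliary function and then exploit concavity of $h$. Define $f(p) = h(p) - 2p$ on $[0,1/2]$; the statement is equivalent to $f(p) \ge 0$ on this interval.

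First I would check the boundary values. At $p=0$ we have $h(0)=0$ (using the convention $0 \log_2 0 = 0$), so $f(0)=0$. At $p = 1/2$ we have $h(1/2) = 1$, so $f(1/2) = 1 - 1 = 0$. Hence $f$ vanishes at both endpoints, and to conclude $f \ge 0$ on the whole interval it suffices to show that $f$ is concave there.

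Next I would differentiate. Working with natural logs is cleanest: $h(p) = (-p \ln p - (1-p)\ln(1-p))/\ln 2$, giving $h'(p) = (\ln(1-p) - \ln p)/\ln 2$ and
\[
h''(p) \;=\; -\frac{1}{\ln 2}\left(\frac{1}{p} + \frac{1}{1-p}\right) \;<\; 0
\]
for $p \in (0,1)$. Since the linear term $-2p$ has zero second derivative, $f''(p) = h''(p) < 0$ on $(0, 1/2)$, so $f$ is strictly concave there.

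A concave function on $[0,1/2]$ that is zero at both endpoints is nonnegative on the whole interval (it lies above the chord joining its endpoint values, which is the zero function). This yields $f(p) \ge 0$, i.e., $p \le h(p)/2$, on $[0,1/2]$ as claimed. I do not foresee any real obstacle here; the only subtlety is handling the endpoint $p=0$ in the definition of $h$, which is standard. Equality holds precisely at $p=0$ and $p=1/2$.
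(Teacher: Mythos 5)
Your proof is correct and rests on the same core fact as the paper's: $h$ is concave and takes the values $0$ and $1$ at $p=0$ and $p=1/2$, so on $[0,1/2]$ it lies above the chord $2p$ — the paper phrases this chord inequality as Jensen's inequality applied to a two-point random variable, while you phrase it via the auxiliary function $f(p)=h(p)-2p$ and an explicit second-derivative check of concavity. These are essentially the same argument, with your version merely supplying the concavity verification that the paper takes for granted.
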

\begin{proof}[Proposition \ref{prop:expectedvalueloss}]
    Note that $h(0) = 0$ and $h(1/2) = 1$. The inequality directly follows from these facts and Jensen's inequality using the concavity of $h(p)$ between $0$ and $1/2$: Consider a random variable $X$ taking value $0$ with probability $1-2p$ and $1/2$ with probability $2p$. Note that the expected value is $p$. Through this observation, we get \[h(\mathbb{E}[X]) = h(p) = h(0(1-2p) + 1/2(2p)) \geq \mathbb{E}[h(X)] = (1-2p) h(0)  + 2ph(1/2) = 2p.\]
\end{proof}

\begin{lemma} \label{lemma:stepvalueloss} Let $b$ be the initial belief, $b'$ be the updated belief after observing $\observation = \cup_{i \in \indexset} \observation^{i}$, and $\action(\observation)$   be the solution to $ \max_{\action \in \actions} \sum_{\state \in \states} \belief'(\state) \qvalue^{*}(\state, \action) $. 
 $$   \sum_{\state \in \states} \entropy{\mathds{1}_{\state}(\randomvar{\state}) |  \randomvar{\observation} } \Delta(\state) \geq \mathbb{E}_{ \randomvar{\observation}} \left[  \sum_{\state \in \states} \probability{\state | \observation} \left( \max_{\action \in \actions} \qvalue^{*}(\state, \action) - \qvalue^{*}(\state, \action( \observation))\right)\right]$$ where the randomness of $\randomvar{\observation} = \cup_{i \in \indexset} \randomvar{\observation}^{i}$ is over the randomness of $s \sim \belief$ and the randomness of the sensors in $\indexset$.
\end{lemma}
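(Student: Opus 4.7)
The plan is to prove the inequality pointwise in the observation realization $\observation$ and then take expectations. Using $\entropy{\mathds{1}_{\state}(\randomvar{\state}) \mid \randomvar{\observation}} = \mathbb{E}_{\randomvar{\observation}}[h(\probability{\state \mid \observation})]$ with $h$ the binary entropy function from Lemma~\ref{lemma:entropybound}, it suffices to show, for each $\observation$,
\[
\sum_{\state \in \states} \probability{\state \mid \observation}\left( \max_{\action \in \actions} \qvalue^{*}(\state, \action) - \qvalue^{*}(\state, \action(\observation)) \right) \leq \sum_{\state \in \states} h\bigl(\probability{\state \mid \observation}\bigr) \Delta(\state).
\]

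Fix $\observation$ and write $p_{\state} := \probability{\state \mid \observation}$, $\action^{\star} := \action(\observation)$, and $\action^{\star}_{\state} \in \arg\max_{\action \in \actions} \qvalue^{*}(\state, \action)$, so that each per-state regret $\qvalue^{*}(\state, \action^{\star}_{\state}) - \qvalue^{*}(\state, \action^{\star})$ lies in $[0, \Delta(\state)]$. If $p_{\state} \leq 1/2$ for every $\state$, Lemma~\ref{lemma:entropybound} gives $p_{\state} \leq h(p_{\state})/2$ and the pointwise bound follows termwise. Otherwise there is a unique state $\state_{0}$ with $p_{\state_{0}} > 1/2$; at $\state_{0}$ the binary entropy $h(p_{\state_{0}})$ can be arbitrarily close to zero while the regret need not be, so the $\state_{0}$ summand cannot be absorbed directly and must be transferred to the other states.

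The transfer uses that $\action^{\star}$ maximizes $\sum_{\state} p_{\state} \qvalue^{*}(\state, \cdot)$: comparing against the alternative $\action^{\star}_{\state_{0}}$ and rearranging yields
\[
p_{\state_{0}}\bigl(\qvalue^{*}(\state_{0}, \action^{\star}_{\state_{0}}) - \qvalue^{*}(\state_{0}, \action^{\star})\bigr) \leq \sum_{\state \neq \state_{0}} p_{\state}\Delta(\state).
\]
Since $p_{\state} < 1/2$ for every $\state \neq \state_{0}$, Lemma~\ref{lemma:entropybound} bounds both this sum and the remaining left-hand side contributions $p_{\state}\Delta(\state)$ (with $\state \neq \state_{0}$) by $\sum_{\state \neq \state_{0}} h(p_{\state})\Delta(\state)/2$ each; adding the two halves gives the desired $\sum_{\state} h(p_{\state})\Delta(\state)$, and taking $\mathbb{E}_{\randomvar{\observation}}$ completes the argument. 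The main obstacle is precisely this high-confidence state where the entropy is small yet the regret is not: the $Q_{\text{MDP}}$ optimality of $\action^{\star}$ is the exact structural ingredient that lets me charge the $\state_{0}$ regret onto the low-probability states, where Lemma~\ref{lemma:entropybound} dominates probability by entropy.
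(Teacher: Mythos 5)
Your proof is correct and follows essentially the same route as the paper's: bound low-probability terms via Lemma~\ref{lemma:entropybound} ($p \leq h(p)/2$), and when a state has posterior above $1/2$, use the $Q_{\text{MDP}}$ optimality of $\action(\observation)$ to charge that state's regret onto the low-probability states. The only difference is cosmetic: you argue pointwise in $\observation$ and handle the paper's Cases 2(a) and 2(b) uniformly, since the transfer inequality holds trivially when $\action(\observation)$ is already optimal for the high-confidence state.
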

\begin{proof}
    Let $\belief'$ be the belief after observing $\observation$. We consider three cases. 
\begin{enumerate}
    \item $\belief'(\state) \leq 1/2$ for all $\state \in \states$. \label{case:allsmallprobs}
    \item $\belief'(\altstate) > 1/2$ for a single $\altstate \in \states$.
    \begin{enumerate}
        \item $\action(\observation) \in \arg \max_{\action \in \actions}  \sum_{\state \in \states}\belief'(\state)\qvalue^{*}(\state, \action)$. \label{case:likelydominates}
        \item $\action(\observation) \not\in \arg \max_{\action \in \actions}  \sum_{\state \in \states}\belief'(\state)\qvalue^{*}(\state, \action)$.  \label{case:likelyloses}
    \end{enumerate}
\end{enumerate}

\textit{Case \ref{case:allsmallprobs}:} Due to the definition of conditional entropy, Lemma \ref{lemma:entropybound}, and $\qvalue^{*}(\state, \action( \observation)) \geq \min_{\action \in \actions} \qvalue^{*}(\state, \action)$, we have
\begin{align}
    \sum_{\state \in \states} \entropy{\mathds{1}_{\state}(\randomvar{\state}) | \randomvar{\observation} } \Delta(\state) &= \sum_{\observation} \probability{\observation} \sum_{\state \in \states } \entropy{\mathds{1}_{\state}(\randomvar{\state}) | \observation}\Delta(\state)
    \\
    &\geq \sum_{\observation} \probability{\observation} \sum_{\state \in \states } 2 \probability{\state | \observation}\Delta(\state)
        \\
    &= \sum_{\observation} \probability{\observation} \sum_{\state \in \states } 2 \probability{\state | \observation} \left( \max_{\action \in \actions} \qvalue^{*}(\state, \action) - \min_{\action \in \actions} \qvalue^{*}(\state, \action)\right)
    \\
    &\geq \sum_{\observation} \probability{\observation} \sum_{\state \in \states } 2 \probability{\state | \observation} \left( \max_{\action \in \actions} \qvalue^{*}(\state, \action) - \qvalue^{*}(\state, \action( \observation))\right)
\end{align}
which shows the desired result.

\textit{Case \ref{case:likelydominates}:} Similarly, due to the definition of conditional entropy, Lemma \ref{lemma:entropybound}, and $\qvalue^{*}(\state, \action( \observation)) \geq \min_{\action \in \actions} \qvalue^{*}(\state, \action)$ and $\qvalue^{*}(\altstate, \action( \observation)) = \max_{\action \in \actions} \qvalue^{*}(\state, \action)$, we have
\begin{align}
    \sum_{\state \in \states} \entropy{\mathds{1}_{\state}(\randomvar{\state}) | \randomvar{\observation} } \Delta(\state) &= \sum_{\observation} \probability{\observation} \sum_{\state \in \states } \entropy{\mathds{1}_{\state}(\randomvar{\state}) | \observation}\Delta(\state)
        \\
    &= \sum_{\observation} \probability{\observation} \left( \sum_{\state \in \states \setminus \lbrace \altstate \rbrace } \entropy{\mathds{1}_{\state}(\randomvar{\state}) | \randomvar{\observation}} \Delta(\state) +  \entropy{\mathds{1}_{\altstate}(\randomvar{\state}) | \randomvar{\observation} } \Delta(\altstate) \right)
    \\
    &\geq \sum_{\observation} \probability{\observation} \left( \sum_{\state \in \states \setminus \lbrace \altstate \rbrace } 2 \probability{\state | \observation}\Delta(\state) +  \entropy{\mathds{1}_{\altstate}(\randomvar{\state}) | \randomvar{\observation} } \Delta(\altstate) \right)
        \\
    &\geq \sum_{\observation} \probability{\observation} \left( \sum_{\state \in \states \setminus \lbrace \altstate \rbrace } 2 \probability{\state | \observation}\left( \max_{\action \in \actions} \qvalue^{*}(\state, \action) - \qvalue^{*}(\state, \action( \observation))\right) +  \entropy{\mathds{1}_{\altstate}(\randomvar{\state}) | \randomvar{\observation} } \Delta(\altstate) \right)
    \\
    &\geq \sum_{\observation} \probability{\observation} \sum_{\state \in \states } 2 \probability{\state | \observation} \left( \max_{\action \in \actions} \qvalue^{*}(\state, \action) - \qvalue^{*}(\state, \action( \observation))\right)
\end{align}
which shows the desired result. Note that the last inequality is because \[\entropy{\mathds{1}_{\altstate}(\randomvar{\state}) | \randomvar{\observation} } \Delta(\altstate) \geq 0 = 2 \probability{\state | \observation} \left( \max_{\action \in \actions} \qvalue^{*}(\state, \action) - \qvalue^{*}(\state, \action( \observation))\right)\] since $\qvalue^{*}(\altstate, \action( \observation)) = \max_{\action \in \actions} \qvalue^{*}(\state, \action)$.

\textit{Case \ref{case:likelyloses}:}
Let $\action^{*} \in \max_{\action \in \actions} \qvalue^{*}(\altstate, \action)$. Since $\action^{*} \neq \action(\observation)$, due to the action selection rule, we know that 
\[\sum_{\state \in \states \setminus \lbrace \altstate \rbrace }\probability{\state | \observation} \qvalue^{*}(\state, \action(\observation)) + \probability{\altstate | \observation} \qvalue^{*}(\altstate, \action(\observation))  \geq  \sum_{\state \in \states \setminus \lbrace \altstate \rbrace }\probability{\state | \observation} \qvalue^{*}(\state, \action^{*}) + \probability{\altstate | \observation} \qvalue^{*}(\altstate, \action^{*})  \] which implies 
\[\sum_{\state \in \states \setminus \lbrace \altstate \rbrace }\probability{\state | \observation} \left (\qvalue^{*}(\state, \action(\observation)) - \qvalue^{*}(\state, \action^{*})\right) \geq \probability{\altstate | \observation} \left(\qvalue^{*}(\altstate, \action^{*}) - \qvalue^{*}(\altstate, \action(\observation)) \right) = \probability{\altstate | \observation} \left(\max_{\action \in \actions} \qvalue^{*}(\altstate, \action) - \qvalue^{*}(\altstate, \action(\observation)) \right)\] Noticing that $\left (\qvalue^{*}(\state, \action(\observation)) - \qvalue^{*}(\state, \action^{*})\right) \leq  \left( \max_{\action \in \actions} \qvalue^{*}(\state, \action) - \min_{\action \in \actions} \qvalue^{*}(\state, \action)\right) $, we also get 

\begin{equation}
    \sum_{\state \in \states \setminus \lbrace \altstate \rbrace }\probability{\state | \observation}  \left( \max_{\action \in \actions} \qvalue^{*}(\state, \action) - \min_{\action \in \actions} \qvalue^{*}(\state, \action)\right) \geq \probability{\altstate | \observation} \left(\max_{\action \in \actions} \qvalue^{*}(\altstate, \action) - \qvalue^{*}(\altstate, \action(\observation)) \right) \label{eq:valuelossdomination}
\end{equation}

Note that Case \ref{case:likelydominates} already shows \[      \sum_{\state \in \states} \entropy{\mathds{1}_{\state}(\randomvar{\state}) | \randomvar{\observation} } \Delta(\state)  \geq  \sum_{\observation} \probability{\observation} \left( \sum_{\state \in \states \setminus \lbrace \altstate \rbrace } 2 \probability{\state | \observation} \left( \max_{\action \in \actions} \qvalue^{*}(\state, \action) - \min_{\action \in \actions} \qvalue^{*}(\state, \action)\right)  \right)\] and \[  \sum_{\state \in \states} \entropy{\mathds{1}_{\state}(\randomvar{\state}) | \randomvar{\observation} } \Delta(\state)  \geq  \sum_{\observation} \probability{\observation} \left( \sum_{\state \in \states \setminus \lbrace \altstate \rbrace } 2 \probability{\state | \observation}\left( \max_{\action \in \actions} \qvalue^{*}(\state, \action) - \qvalue^{*}(\state, \action( \observation))\right)  \right)\] adding these inequalities and using \eqref{eq:valuelossdomination}, we get \[
    2\sum_{\state \in \states} \entropy{\mathds{1}_{\state}(\randomvar{\state}) | \randomvar{\observation} } \Delta(\state) \geq \sum_{\observation} \probability{\observation} \sum_{\state \in \states } 2 \probability{\state | \observation} \left( \max_{\action \in \actions} \qvalue^{*}(\state, \action) - \qvalue^{*}(\state, \action( \observation))\right)
\] which shows the desired result.
\end{proof}

\begin{lemma} \label{lemma:sim}
Let $\pi_{T}$ be a policy such that the agent follows the sensor selection rule Algorithm $1$ and the action selection rule \eqref{eq:qmdpactionselection} for $t$ time steps such that \(\sum_{\state \in \states}\entropy{\mathds{1}_{\state}(\randomvar{\state}_{t}) |  = \cup_{i \in \indexset} \randomvar{\observation}^{i} } \Delta(\state) \leq c\) for all $0\leq t \leq T$, then gets the actual state observations and follows $\pi^{*}$. Also, let $v_{T}$ be the expected return under $\pi_{T}$. Then \[v_{T-1} - v_{T} \leq \gamma^{T} c.\]
\end{lemma}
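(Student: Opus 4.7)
The plan is to couple $\pi_{T-1}$ and $\pi_{T}$ so that they agree on the prefix $t = 0, 1, \ldots, T-1$ and then reduce the value gap to a single-step loss bound at time $T$. By construction both policies run the same sensor-selection rule (Algorithm \ref{algo:rationalinattention}) and action-selection rule \eqref{eq:qmdpactionselection} for the first $T$ time steps, so I would build a coupling in which the trajectory of states, observations, and actions is identical under both policies through time $T-1$. In the single-agent MDP setting of Section \ref{sec:blueplayer}, with an accurate prior the Bayes updates \eqref{eq:obsbeliefupdate} make the maintained belief $\belief_{T}$ at time $T$ coincide with the true posterior distribution of $\state_{T}$ given the realized history, and this distribution is the same under both policies.

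Because the prefix rewards cancel and, under both policies, the agent plays $\policy^{*}$ with perfect state information from time $T+1$ onward, the value gap collapses to a single term at time $T$:
\[
v_{T-1} - v_{T} \;=\; \discount^{T}\, \mathbb{E}\bigl[\gamevalue^{*}(\state_{T}) - \qvalue^{*}(\state_{T}, \action(\observation_{T}))\bigr],
\]
where $\observation_{T} = \cup_{i \in \indexset_{T}} \observation^{i}_{T}$ and $\action(\observation_{T})$ is the $Q_{\text{MDP}}$ action chosen from the updated belief $\belief'_{T}$ as in \eqref{eq:qmdpactionselection}. Conditioning on the history up to time $T-1$ (which fixes $\belief_{T}$ and the sensor set $\indexset_{T}$) and using $\gamevalue^{*}(\state) = \max_{\action} \qvalue^{*}(\state, \action)$, the inner expectation is exactly the right-hand side of Lemma \ref{lemma:stepvalueloss} applied with initial belief $\belief = \belief_{T}$. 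The hypothesis of the lemma then gives
\[
\mathbb{E}\bigl[\gamevalue^{*}(\state_{T}) - \qvalue^{*}(\state_{T}, \action(\observation_{T}))\mid \text{history}\bigr] \;\le\; \sum_{\state} \entropy{\mathds{1}_{\state}(\randomvar{\state}_{T}) \mid \cup_{i \in \indexset_{T}} \randomvar{\observation}^{i}} \Delta(\state) \;\le\; c.
\]
Averaging this inequality over histories preserves the bound, and multiplying by $\discount^{T}$ yields $v_{T-1} - v_{T} \le \discount^{T} c$.

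The main obstacle is essentially bookkeeping: making the coupling rigorous and verifying that the belief $\belief_{T}$ genuinely equals the true posterior of $\state_{T}$ given the observed history, a property that hinges on prior and dynamics being accurate (the standing assumption of Section \ref{sec:blueplayer}, see also the confirmation-bias discussion after Proposition \ref{prop:expectedvalueloss}). Once that identification is established, the result reduces to a one-step simulation argument in which Lemma \ref{lemma:stepvalueloss} supplies the value-loss inequality and the discount factor $\discount^{T}$ appears simply because the only differing decision occurs at time $T$.
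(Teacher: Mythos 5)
Your proposal is correct and follows essentially the same route as the paper: both arguments cancel the shared prefix up to time $T-1$, reduce $v_{T-1}-v_{T}$ to the single discounted one-step loss $\discount^{T}\,\mathbb{E}\bigl[\max_{\action}\qvalue^{*}(\state_{T},\action)-\qvalue^{*}(\state_{T},\action_{T})\bigr]$, and bound it by $\discount^{T}c$ via Lemma \ref{lemma:stepvalueloss} together with the entropy hypothesis. The belief-equals-true-posterior caveat you flag is likewise implicit in the paper's standing assumption, so it is not a deviation from its argument.
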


\begin{proof}
Let $I_{t}$ be the set of sensors choosen at time $t$.
    We have 
    \begin{align}
        v_{T} &=\mathbb{E}\left[\sum_{t=0}^{\infty} \discount^{t} \reward(\state_{t}, \action_{t}) \Bigg| \belief_{0}, \pi_{T} \right] 
\\
&= \mathbb{E}\left[\sum_{t=0}^{T} \discount^{t} \reward(\state_{t}, \action_{t}) \Bigg| \belief_{0}, \text{$I_{0}, \ldots, I_{T}$ satisfies Algorithm \eqref{algo:rationalinattention}, $\action_{0},\ldots,\action_{T}$ satisfies \eqref{eq:qmdpactionselection}}\right] 
\\
&\quad + \mathbb{E}\left[\sum_{t=T+1}^{\infty} \discount^{t} \reward(\state_{t}, \action_{t}) \Bigg| \belief_{0}, \text{$I_{0}, \ldots, I_{T}$ satisfies Algorithm \eqref{algo:rationalinattention}, $\action_{0},\ldots,\action_{T}$ satisfies \eqref{eq:qmdpactionselection}}, \action_{t} \sim \pi^{*}(\state_{t}) \text{ for all $t \geq T+1$}\right]
\\
&= \mathbb{E}\left[\sum_{t=0}^{T-1} \discount^{t} \reward(\state_{t}, \action_{t}) \Bigg| \belief_{0}, \text{$I_{0}, \ldots, I_{T-1}$ satisfies Algorithm \eqref{algo:rationalinattention}, $\action_{0},\ldots,\action_{T-1}$ satisfies \eqref{eq:qmdpactionselection}}\right]
\\
&\quad+\mathbb{E}\left[ \discount^{T} \reward(\state_{T}, \action_{T}) \Bigg| \belief_{0}, \text{$I_{0}, \ldots, I_{T}$ satisfies Algorithm \eqref{algo:rationalinattention}, $\action_{0},\ldots,\action_{T}$ satisfies \eqref{eq:qmdpactionselection}}\right] 
\\
&\quad + \mathbb{E}\left[\sum_{t=T+1}^{\infty} \discount^{t} \reward(\state_{t}, \action_{t}) \Bigg| \belief_{0}, \text{$I_{0}, \ldots, I_{T}$ satisfies Algorithm \eqref{algo:rationalinattention}, $\action_{0},\ldots,\action_{T}$ satisfies \eqref{eq:qmdpactionselection}}, \action_{t} \sim \pi^{*}(\state_{t}) \text{ for all $t \geq T+1$}\right].
    \end{align}
     Note that     \begin{align}
         &\mathbb{E}\left[\sum_{t=T+1}^{\infty} \discount^{t} \reward(\state_{t}, \action_{t}) \Bigg| \belief_{0}, \text{$I_{0}, \ldots, I_{T}$ satisfies Algorithm \eqref{algo:rationalinattention}, $\action_{0},\ldots,\action_{T}$ satisfies \eqref{eq:qmdpactionselection}}, \action_{t} \sim \pi^{*}(\state_{t}) \text{ for all $t \geq T+1$}\right]
         \\
         &=  \mathbb{E}\left[\gamma^{T+1} \gamevalue^{*}(s_{T+1}) \Bigg| \belief_{0}, \text{$I_{0}, \ldots, I_{T}$ satisfies Algorithm \eqref{algo:rationalinattention}, $\action_{0},\ldots,\action_{T}$ satisfies \eqref{eq:qmdpactionselection}}\right]
    \end{align} which implies
\begin{align}
        v_{T} 
&= \mathbb{E}\left[\sum_{t=0}^{T-1} \discount^{t} \reward(\state_{t}, \action_{t}) \Bigg| \belief_{0}, \text{$I_{0}, \ldots, I_{T-1}$ satisfies Algorithm \eqref{algo:rationalinattention}, $\action_{0},\ldots,\action_{T-1}$ satisfies \eqref{eq:qmdpactionselection}}\right]
\\
&\quad+\mathbb{E}\left[ \discount^{T} \reward(\state_{T}, \action_{T}) + \gamma \mathbb{E}\left[ \gamevalue^{*}(\state_{T+1}) \right] \Bigg| \belief_{0}, \text{$I_{0}, \ldots, I_{T}$ satisfies Algorithm \eqref{algo:rationalinattention}, $\action_{0},\ldots,\action_{T}$ satisfies \eqref{eq:qmdpactionselection}}\right]
\\
&= \mathbb{E}\left[\sum_{t=0}^{T-1} \discount^{t} \reward(\state_{t}, \action_{t}) \Bigg| \belief_{0}, \text{$I_{0}, \ldots, I_{T-1}$ satisfies Algorithm \eqref{algo:rationalinattention}, $\action_{0},\ldots,\action_{T-1}$ satisfies \eqref{eq:qmdpactionselection}}\right]
\\
&\quad+\mathbb{E}\left[ \discount^{T} \qvalue^{*}(s_{T}, \action_{T}) \Bigg| \belief_{0}, \text{$I_{0}, \ldots, I_{T}$ satisfies Algorithm \eqref{algo:rationalinattention}, $\action_{0},\ldots,\action_{T}$ satisfies \eqref{eq:qmdpactionselection}}\right].
    \end{align}

Also, note that 
\begin{align}
        v_{T-1} 
&= \mathbb{E}\left[\sum_{t=0}^{T-1} \discount^{t} \reward(\state_{t}, \action_{t}) \Bigg| \belief_{0}, \text{$I_{0}, \ldots, I_{T-1}$ satisfies Algorithm \eqref{algo:rationalinattention}, $\action_{0},\ldots,\action_{T-1}$ satisfies \eqref{eq:qmdpactionselection}}\right]
\\
&\quad+\mathbb{E}\left[ \discount^{T} \reward(\state_{T}, \action_{T}) + \gamma \mathbb{E}\left[ \gamevalue^{*}(\state_{T+1}) \right] \Bigg| \belief_{0}, \text{$I_{0}, \ldots, I_{T-1}$ satisfies Algorithm \eqref{algo:rationalinattention}, $\action_{0},\ldots,\action_{T-1}$ satisfies \eqref{eq:qmdpactionselection}, $\action_{T} \sim \pi^{*}(\state_{T})$} \right]
\\
&= \mathbb{E}\left[\sum_{t=0}^{T-1} \discount^{t} \reward(\state_{t}, \action_{t}) \Bigg| \belief_{0}, \text{$I_{0}, \ldots, I_{T-1}$ satisfies Algorithm \eqref{algo:rationalinattention}, $\action_{0},\ldots,\action_{T-1}$ satisfies \eqref{eq:qmdpactionselection}}\right]
\\
&\quad+\mathbb{E}\left[ \discount^{T} \max_{\action_{T} \in \actions} \qvalue^{*}(\state_{T}, \action_{T}) \Bigg| \belief_{0}, \text{$I_{0}, \ldots, I_{T-1}$ satisfies Algorithm \eqref{algo:rationalinattention}, $\action_{0},\ldots,\action_{T-1}$ satisfies \eqref{eq:qmdpactionselection}} \right].
    \end{align}

Taking the difference between $v_{T}$ and $v_{T-1}$, \begin{align}
        v_{T-1} - v_{T}
&= \mathbb{E}\left[ \discount^{T} \max_{\action \in \actions} \qvalue^{*}(\state_{T}, \action) \Bigg| \belief_{0}, \text{$I_{0}, \ldots, I_{T-1}$ satisfies Algorithm \eqref{algo:rationalinattention}, $\action_{0},\ldots,\action_{T-1}$ satisfies \eqref{eq:qmdpactionselection}} \right]
\\
&- \mathbb{E}\left[ \discount^{T} \qvalue^{*}(s_{T}, \action_{T}) \Bigg| \belief_{0}, \text{$I_{0}, \ldots, I_{T}$ satisfies Algorithm \eqref{algo:rationalinattention}, $\action_{0},\ldots,\action_{T}$ satisfies \eqref{eq:qmdpactionselection}}\right]
\\
&= \mathbb{E}\left[ \discount^{T} \left(\max_{\action \in \actions} \qvalue^{*}(\state_{T}, \action) - \qvalue^{*}(s_{T}, \action_{T}) \right) \Bigg| \belief_{0}, \text{$I_{0}, \ldots, I_{T-1}$ satisfies Algorithm \eqref{algo:rationalinattention}, $\action_{0},\ldots,\action_{T-1}$ satisfies \eqref{eq:qmdpactionselection}} \right].
    \end{align}

We have \(\sum_{\state \in \states}\entropy{\mathds{1}_{\state}(\randomvar{\state}_{T}) |  = \cup_{i \in \indexset} \randomvar{\observation}^{i} } \Delta(\state) \leq c\) and $\action_{T} \in \arg \max_{\action \in \actions} \sum_{\state \in \states} \belief'_{T}(\state) \qvalue^{*}(\state, \action)$. Combining these facts with Lemma \ref{lemma:stepvalueloss} and the above expression for $v_{T-1} - v_{T}$, $v_{T-1} - v_{T} \leq \gamma^{T} c$.
\end{proof}

\begin{proof}[Proposition \ref{prop:expectedvalueloss}]

Let $\pi_{T}$ be a sequence of policies defined as in Lemma \ref{lemma:sim} and $v_{T}$ be their respective expected returns. 

Due to Lemma \ref{lemma:sim}, we have
\begin{align}
    v_{-1} - v_{\infty} = \sum_{T = -1}^{\infty} (v_{T} - v_{T+1}) \leq  \sum_{T = -1}^{\infty} \gamma^{T+1} c = \frac{c}{1-\gamma}.
\end{align}

By noting that $v_{\infty} = v$ as defined in the proposition and $v_{-1} = \gamevalue^{*}(\initialstate) $, we get the desired result.
\end{proof}

\begin{proof}[Proposition \ref{prop:deceptiongain}]
The decision rule \eqref{eq:pl2distselection} assumes that both players play the subgame Nash equilibrium strategies in the preceding timesteps, and at the current time step, Player $2$ changes its action distribution from that of the equilibrium policy only if it improves its expected return over the security value given the action distribution of Player $1$.

Consider a scenario where the current time step is $0$ and the players act as described above. Then the expected return for player $2$ can only improve since player $1$ deviated from the equilibrium action distribution. 

Next, consider that the players act as described above for two time steps. Player $2$ made its decision at time $0$, considering that it will collect the security value for all branches.  But, applying the same logic to the value for player $2$ at time $1$, the expected return is better than the security value. Since the expected return improves for all branches, the actual expected return at time $0$ is also better than the security value that Player $2$ guaranteed at time $0$. 

Applying this idea recursively to all branches, we observe that the value at all nodes improve upon the security value and therefore $v \geq \gamevalue(\initialstate)$.
\end{proof}

\end{document}